\algrenewcommand\algorithmicrequire{\textbf{Input:}}
\algrenewcommand\algorithmicensure{\textbf{Output:}}
\algnewcommand\algorithmicforeach{\textbf{for each}}
\newcommand{\R}{\mathbb{R}}
\newcommand{\N}{\mathbb{N}}
\newcommand{\Exp}{\mathbb{E}}
\newcommand{\E}[1]{\Exp[ #1 ]}
\newcommand{\BiggE}[1]{\Exp\Biggl[ #1 \Biggr]}
\newcommand{\bigE}[1]{\Exp\bigl[ #1 \bigr]}
\newcommand{\Prob}{\mathbb{P}}
\renewcommand{\P}[1]{\Prob[ #1 ]}
\newcommand{\BiggP}[1]{\Prob\Biggl[ #1 \Biggr]}
\newcommand{\bigP}[1]{\Prob\bigl[ #1 \bigr]}
\renewcommand{\v}{\mathbf{v}}
\newcommand{\s}{\mathbf{s}}
\newcommand{\x}{\mathbf{x}}
\newcommand{\y}{\mathbf{y}}
\newcommand{\z}{\mathbf{z}}
\newcommand{\p}{\mathbf{p}}
\renewcommand{\r}{\mathbf{r}}
\newcommand{\OPT}{\x^*}
\newcommand{\nalgInfGAP}{\textsc{InfeasibleGAP}}
\newcommand{\nalgFeasGAP}{\textsc{FeasibleGAP}}
\newcommand{\nalgImitGAP}{\textsc{ImitativeGAP}}
\newcommand{\nalgRandGAP}{\textsc{RandomGAP}}
\newcommand{\nalgFractionalKP}{\textsc{FractionalKnapsack}}
\newtheorem{theorem}{Theorem}
\newtheorem{proposition}{Proposition}
\newtheorem{lemma}{Lemma}
\begin{document}

\title{Generalized Assignment and Knapsack Problems in the Random-Order Model} 

\author{Max Klimm$^1$}
\address{$^1$Institute for Mathematics, Technische Universität Berlin, Germany}
\email{klimm@math.tu-berlin.de}

\author{Martin Knaack$^1$}
\email{knaack@math.tu-berlin.de}

\begin{abstract}
We study different online optimization problems in the random-order model.
There is a finite set of bins with known capacity and a finite set of items arriving in a random order.
Upon arrival of an item, its size and its value for each of the bins is revealed and it has to be decided immediately and irrevocably to which bin the item is assigned, or to not assign the item at all.
In this setting, an algorithm is $\alpha$-competitive if the total value of all items assigned to the bins is at least an $\alpha$-fraction of the total value of an optimal assignment that knows all items beforehand.
We give an algorithm that is $\alpha$-competitive with $\alpha = (1-\ln(2))/2 \approx 1/6.52$ improving upon the previous best algorithm with $\alpha \approx 1/6.99$ for the generalized assignment problem and the previous best algorithm with $\alpha \approx 1/6.65$ for the integral knapsack problem.
We then study the fractional knapsack problem where we have a single bin and it is also allowed to pack items fractionally. For that case, we obtain an algorithm that is $\alpha$-competitive with $\alpha = 1/e \approx 1/2.71$ improving on the previous best algorithm with $\alpha = 1/4.39$. We further show that this competitive ratio is the best-possible for deterministic algorithms in this model.
\end{abstract}

\keywords{Generalized Assignment Problem, Random-Order Model, Knapsack Problem, Online Optimization} 

\maketitle

\newpage

\section{Introduction}

In the secretary problem, there is a sequence of $n$ applicants with unknown values that arrive in a random order.
Upon arrival of an applicant, the decision maker observes the value of the applicant and has to decide immediately and irrevocably whether to hire the applicant, or not. The famous $1/e$-rule stipulates that in order to maximize the probability of hiring the applicant with the highest value, the optimal policy is to not hire any of the first $n/e$ applicants and to hire any later applicant whose value exceeds all values observed so far (\citet{Dynkin63}; \citet{Lindley61}).
In the worst case, it achieves a probability of hiring the best applicant of $1/e$.
Surprisingly, this bound of $1/e$ also translates to a setting where the values of the applicants are drawn i.i.d.\ from a distribution not known to the decision maker and the objective is to maximize the expected value of the hired applicant (\citet[Theorem~2]{CorreaDFS22}).

In this paper, we study a generalization of the secretary problem known as the \emph{generalized assignment problem} in the random-order model. In this problem, there is a sequence of $n$ items that arrive in a random order and a set of $m$ bins. Every bin~$i$ has a known capacity $C_i > 0$. Upon arrival of  item~$j$, the value $v_{i,j} \geq 0$ and size $s_{i,j} \geq 0$ of item~$j$ for bin~$i$ is revealed.
After having observed these values, the decision maker has to decide immediately and irrevocably to which bin this item~$j$ is assigned to, or to not assign the item to any bin at all.
Formally, we introduce a binary variable $x_{i,j} \in \{0,1\}$ for each bin~$i$ and item~$j$ which is equal to $1$ if item~$j$ is assigned to bin~$i$ and $0$ otherwise. Since every item can be assigned to at most one bin, we have the inequality $\sum_{i=1}^m x_{i,j} \leq 1$ for all items~$j$.
In addition, the assignment of the items to the bins has to obey the capacity constraint of each bin, that is, $\sum_{j=1}^n s_{i,j} \, x_{i,j} \leq C_i$ for all bins~$i$.
The goal of decision maker is to maximize the expected value of the items assigned to the bins, i.e.,  $\smash{\sum_{i=1}^m \sum_{j=1}^n v_{i,j} \, x_{i,j}}$. Let
\textsc{Opt}
denote the value of an optimal offline solution of the generalized assignment problem, and let $x_{i,j}$, $i \in [m]$, $j \in [n]$ denote the variables set by an online algorithm. Then, for $\alpha \in [0,1]$, we call an online algorithm \emph{$\alpha$-competitive} if, on all instances, we have $\sum_{i=1}^m \sum_{j=1}^n v_{i,j} \,x_{i,j} \geq (\alpha - o(1))\,\textsc{Opt}$. We then also call $\alpha$ the \emph{competitive ratio} or \emph{competitiveness} of the algorithm.

The generalized assignment problem in the random-order model has been studied before by Kesselheim et al.~\cite{KesselheimRTV18} who gave an algorithm that is $1/8.1$-competitive.
\citet{NaoriR19} and, independently, \citet{AlbersKL21} gave an algorithm with an improved  competitive ratio of $1/6.99$.

\subsection{Our Contribution and Techniques}

In this paper, we give an algorithm for the online generalized assignment problem in the random-order model with an improved competitiveness of $(1-\ln(2))/2 \approx 1/6.52$. Similar to previous algorithms \cite{AlbersKL21,KesselheimRTV18,NaoriR19}, our algorithm is based on a fractional relaxation of the generalized assignment problem.
After a sampling phase of $n/2$ items that will not be assigned, after each item arrived, our algorithm solves the fractional allocation of the generalized assignment problem and assigns the item randomly according to the fractional variables corresponding to them item as long as the capacity of this bin is not exceeded.
A main challenge in the analysis of this algorithm is to bound the probability that the assignment of an item to a bin is successful, i.e., there is still enough capacity in the bin to accommodate the item.
Previous algorithms handle this challenge by distinguishing between large and small assignments.
\citet{KesselheimRTV18} call an assignment of an item~$j$ to a bin~$i$ \emph{heavy} if it uses more than half of the capacity of the bin, i.e., $s_{i,j} > C_i/2$, and \emph{light} otherwise.
Their algorithm then either only considers heavy assignments or only light assignments. If only heavy assignments are considered, the instance reduces to an edge-weighted matching problem in the random-order model for which an $1/e$-competitive algorithm is known (\citet{KesselheimRTV13}).
When only light assignments are considered, it suffices to bound the probability that in the assignment based on the fractional relaxation, the selected bin~$i$ is at most filled up to $C_i/2$.
\citet{AlbersKL21} and \citet{NaoriR19} use the same distinction between heavy and light assignments, but use a sequential approach. In a first phase of the algorithm from round $1$ to round $\lfloor 0.5261n \rfloor$ no item is packed. In a second phase from round $\lfloor 0.5261n \rfloor+1$ to round $ \lfloor 0.6906n\rfloor$ only heavy assignments are considered, and in a third phase from round $\lfloor 0.6906n\rfloor+1$ on only light assignments are considered. The heavy assignments are again handled by a bipartite matching algorithm while the small items are assigned based on the fraction relaxation.

Clearly, the distinction between heavy and light assignments degrades the performance of these algorithms since the packing of otherwise lucrative items maybe prevented based on the somewhat arbitrary distinction between heavy and light assignments.
Our algorithm does not use any distinction between heavy and light assignments at all. This is problematic for the heavy assignments since it is not clear how to bound the probability that the bin is sufficiently empty to pack the item. 
We circumvent this issue by considering preliminary assignments where the capacity of a bin may be violated by a single item. This  makes it possible to bound the probability of a successful preliminary assignment by bounding the probability that in previous rounds the sum of the weights of the items assigned to bin~$i$ is at most $C_i$ which can be done with similar techniques as by \citet{KesselheimRTV18}.
We show that this infeasible assignment obtains a competitive ratio of $1-\ln(2) \approx 1/3.26$.
To turn this infeasible assignment to a feasible one, we have to sacrifice an additional factor of $1/2$. With probability $1/2$, we run the original algorithm but do \emph{not} add the last item violating the capacity constraint to each bin. With the remaining probability of $1/2$, we run this algorithm only virtually and assign \emph{only} the last item that would violate the capacity of the bin in the original algorithm.
The resulting algorithm always obtains a feasible assignment and achieves a competitive ratio of $(1-\ln(2))/2 \approx 6.52$. The technique of overpacking a bin and selecting with probability $1/2$ either all items but the last or the last is reminiscent of the two-bin algorithm used by \citet{HanKM15} for the online knapsack problem with adversarial order and unit density.

We further study the fractional knapsack problem in the random-order model. This problem corresponds to the generalized assignment problem with a single bin where items can also be packed fractionally, i.e., we have $x_j \in [0,1]$ for all items~$j$. For this problem, we give an algorithm with competitive ratio $1/e$ improving on the previous best competitive ratio of $1/4.39$ due to \citet{GilibertiK21}.
Our algorithm has a sample phase of $\lfloor n/e \rfloor$ items. After that phase, in each round $\ell$, the algorithm solves the fractional knapsack problem once with all items revealed so far and once with all items revealed so far except the current item.
It then packs the current item fractionally corresponding to the total volume of items revealed during the sample phase that where removed from the fractional knapsack solution by including the current item.
This procedure ensures that the fractional knapsack solution computed by the algorithm does not exceed the knapsack capacity.
This general idea has been used before by the virtual algorithm devised by \citet{BabaioffIK07} for the $k$-secretary problem, but the latter problem is conceptionally easier since all items have a unit size and items are always packed integrally. 
We further show that the competitive ratio of $1/e$ is best-possible for deterministic algorithms. We first note that this does not readily follow from the lower bound of $1/e$ for the standard secretary problem since items may also be packed fractionally which may help an online algorithm.
For the proof, we exploit a similarity between deterministic algorithms for the fractional knapsack problem in the random-order model and randomized stopping rules together with Correa et al.~\cite[Theorem~2]{CorreaDFS22}. For a summary of our results and a comparison with previous results, see \Cref{tab:results}.

\begin{table}
\centering
\begin{tabular}{llr}
\toprule
\multirow{2}{*}{\textbf{Model}} & \multicolumn{2}{c}{Competitive Ratios}\\
 & \multicolumn{1}{c}{Previous Results} & \multicolumn{1}{c}{Our Results} \\
\midrule
Generalized Assignment Problem & $1/6.99$\; \cite{AlbersKL21,NaoriR19} & $\frac{1-\ln(2)}{2} \approx 1/6.52$\; [Thm.~\ref{theo:gap}]\phantom{$^\star$} \\[4pt]

Fractional Knapsack & $1/4.39$\; \cite{GilibertiK21} & $1/e \approx 1/2.71$\; [Thm.~\ref{theo:fractional-knapsack}]$^\star$\\[4pt]
\bottomrule
& & \footnotesize{$^\star$ best-possible by [Thm.~\ref{theo:lower}]}\\
\end{tabular}
\caption{Results obtained in this paper.
\label{tab:results}}
\end{table}

\subsection{Related Work}
For a general introduction to offline knapsack and generalized assignment problems, see \citet{MartelloT90}.
\citet{ChekuriK05} showed that the generalized assignment problem is $\mathsf{APX}$-hard.
For a minimization version of the problem, \citet{ShmoysT93} gave a $1/2$-approximation. \citet{ChekuriK05} noted that this $1/2$-approximation translates also to the maximization version of the problem. 
The generalized assignment problem contains as a special case the knapsack problem when there is a single bin only. This problem is $\mathsf{NP}$-hard (\citet{Karp72}) but admits a fully polynomial-time approximation scheme (FPTAS) as shown by \citet{IbarraK75}.
It further contains as a special case the multiple knapsack problem when the value and size of each item does not depend on the bin. \citet{ChekuriK05} showed that this problem does not admit an FPTAS even for 2 bins, unless $\mathsf{P} = \mathsf{NP}$, and gave a polynomial-time approximation scheme (PTAS). The generalized assignment problem further contains as a special case the maximum bipartite matching problem when all bin capacities are $1$, and all item sizes are $1$ or $2$.
A further special case of the generalized assignment problem is the AdWords problem that corresponds to the case when $s_{i,j} = v_{i,j}$ for all bins~$i$ and items~$j$. Here, the bins correspond to advertisers with a given daily budget $C_i$ and the items correspond to queries to a search engine. The value $s_{i,j} = v_{i,j}$ is the revenue generated from assigning query~$j$ to advertiser~$i$.

For the online bipartite matching problem where items arrive in an adversarial order, \citet{KarpVV90} gave an algorithm with a competitive ratio of $1-1/e \approx 0.63$. Interestingly, this is also the competitive ratio of the natural greedy algorithm in the random-order model. In the random-order model, an algorithm with a competitive ratio of $\alpha \approx 0.69$ was devised by \citet{MahdianY11}. 
The edge-weighted variant of the maximal online bipartite matching problem was first studied by \citet{KorulaP09} who gave a $1/8$-competitive algorithm. This was improved by \citet{KesselheimRTV13} to a $1/e$-competitive algorithm. This is best-possible since this contains the secretary problem as a special case.
The node-weighted variant of the maximal online bipartite matching problem corresponds to the matroid secretary problem when the underlying matroid is a transversal matroid. For general matroids, \citet{BabaioffIK07,BabaioffIKK18} gave a $O(1/\log k)$-competitive algorithm where $k$ is the rank of the matroid. This bound has been improved to $O(1/\sqrt{\log k})$ by \citet{ChakrabortyL12} and to $O(1/\log \log k)$ by \citet{Lachish14} and \citet{FeldmanSZ18}.
The question whether there is a constant-competitive algorithm for general matroids remains open \cite{BahraniBSW21}. Algorithms with constant competitiveness are only known for special cases of matroids such as a $1/(2e)$-competitive algorithm by Korula and P{\'{a}}l~\cite{KorulaP09}, a $1/4$-competitive algorithm by Soto et al.~\cite{SotoTV21}, and a $1/3.71$-competitive algorithm by Bérczi et al.~\cite{BercziLSV24} for graphic matroids, as well as a $1/9.6$-competitive algorithm by Ma et al.~\cite{Ma0W16}, a $5.19$-competitive algorithm by Soto et al.~\cite{SotoTV21}, a $1/4.75$-competitive algorithm by Huang et al.~\cite{HuangPZ24}, and a $1/3.26$-competitive algorithm by Bérczi et al.~\cite{BercziLSV24} for laminar matroids.
\citet{MehtaSVV05} study the online AdWords problem in the random-order model. Under a large markets assumption that no item contributes a constant fraction to the capacity of an advertiser, they provide an algorithm with a competitiveness of $1-1/e \approx 0.63$.

The random-order knapsack problem was first studied by Babaioff et al.~\cite{BabaioffIKK07} who gave an $\alpha$-competitive algorithm with $\alpha = 1/(10e) \approx 1/27.18$.
For the same problem, Kesselheim et al.~\cite{KesselheimRTV18} gave an algorithm with a competitiveness of $\alpha \approx 1/8.06$. The competitive ratio was further improved by Albers et al.~\cite{AlbersKL21} to $\alpha \approx 1/6.65$.
Under a large market assumption that no item contributes a constant fraction to the optimum solution, \citet{Vaze17} gave a $\alpha$-competitive algorithm with $\alpha = 1/(2e) \approx 5.44$. \citet{AbelsLSS22} considered the special case in which the capacity is $2$ and all items have a size of either $1$ or $2$. They obtain an algorithm that matches the best-possible competitiveness of $1/e$.

The fractional knapsack problem in the random-order model was first studied by \citet{KarrenbauerK20}. They developed a framework that turns an $\alpha$-competitive algorithm for the integral knapsack problem into an $1/(\alpha^{-1}+e)$-competitive algorithm for the fractional knapsack problem. Together with the algorithm of \citeauthor{AlbersKL21}, this yields a competitiveness of $\alpha \approx 1/9.37$. \citet{GilibertiK21} gave an algorithm with an improved competitive ratio of $\alpha \approx 1/4.39$.

The random-order online generalized assignment problem was first studied by \citet{KesselheimRTV18} who gave an $\alpha$-competitive algorithm with $\alpha \approx 1/8.06$. \citet{AlbersKL21} gave an algorithm with an improved competitiveness of $\alpha \approx 1/6.99$. A variant of the generalized assignment problem where capacities can be violated has been studied by \citet{FeldmanKMMP09} under a large markets assumption.

The generalized assignment problem can be further generalized to online packing LPs where the main difference is that packing an item may consume capacity of multiple bins. The competitiveness of online algorithms depends on various parameters of the problem such as the maximum number of bins used by packing an item and the ratio of the capacity of a bin and the capacity consumption of an item \cite{AgrawalWY14,BuchbinderN09,KesselheimRTV18}.

\section{Preliminaries}
\label{sec:preliminaries}

We let $\N$ denote the strictly positive natural numbers and, for $n \in \N$, let $[n] = \{1,\dotsc,n\}$. The generalized assignment problem (GAP) is given by a set $[m]$ of bins, each with a capacity $C_i > 0$ for $i \in [m]$, and a set $[n]$ of items. Assigning an item $j \in [n]$ to a bin $i \in [m]$ raises the total size of the bin by $s_{i,j} > 0$ and generates a profit or value of $v_{i,j} > 0$. The goal is to find an assignment of items to bins that maximizes the total value and fulfils the capacity constraint for every bin. The problem can be stated as the following integer linear program:
\begin{align}
    \max \ & \sum_{i=1}^{m} \sum_{j=1}^{n} v_{i,j} \, x_{i,j}, \nonumber \\
    \text{s.t. } 	& \sum_{j=1}^{n} s_{i,j} \, x_{i,j} \leq C_i &&\text{for all } i \in [m], \tag{C1} \label{eq:GapConstraint1} \\
    & \sum_{i=1}^{m} x_{i,j} \leq 1 &&\text{for all } j \in [n], \tag{C2} \label{eq:GapConstraint2} \\
    & x_{i,j} \in \{0,1\} &&\text{for all } i \in [m], j \in [n] \tag{C3} \label{eq:GapConstraint3}.
\end{align}
We also make use of the linear programming relaxation where \eqref{eq:GapConstraint3} is replaced by 
\begin{equation}
    \phantom{\max} \ x_{i,j} \geq 0 \qquad \text{ for all } i \in [m], j \in [n]. \tag{C4} \label{eq:GapRelaxationConstraint}
\end{equation}
We denote an assignment by $\x \in \{0,1\}^{m \times n}$ with $\x = (x_{i,j})_{i \in [m], j \in [n]}$ and we let 
\begin{equation*}
    v(\x) = \sum_{i=1}^{m} \sum_{j=1}^{n} v_{i,j} \, x_{i,j}
\end{equation*}
denote the total value of an assignment. We further let $\x^* \in \arg\max\{ v(\x) \mid \x \text{ fulfils } \eqref{eq:GapConstraint1}-\eqref{eq:GapConstraint3} \}$ and $\tilde{\x} \in \arg\max\{ v(\x) \mid \x \text{ fulfils } \eqref{eq:GapConstraint1}, \eqref{eq:GapConstraint2}, \eqref{eq:GapRelaxationConstraint} \}$ denote an optimal binary assignment and an optimal fractional assignment, respectively. For any assignment we let $\x_j$ be the $j$th column of $\x$. In addition let $\v_j = (v_{1,j},\dotsc,v_{m,j})^\top$ and $\s_j = (s_{1,j},\dotsc,s_{m,j})^\top$ be the vectors with values and sizes of item $j \in [n]$. Without loss of generality, we assume that $s_{i,j} \leq C_i$ for each bin $i \in [m]$ and each item $j \in [n]$.

In the random-order model an algorithm knows the number of items $n$, but the values $\v_j$ and sizes $\s_j$ of each item $j \in [n]$ are previously unknown. They are revealed to an algorithm one by one in rounds $1,\dotsc,n$ and an algorithm has to decide immediately and irrevocably whether to assign the item to a bin or to skip that item. The order in which the items are revealed is given by a random permutation $\pi \colon [n] \rightarrow [n]$ that assigns to each round $\ell \in [n]$ the item $\pi(\ell)$ that is revealed. The random permutation is drawn uniformly at random  from the set of all possible permutations on $[n]$, which we denote by $\Pi$. We say that an algorithm in the random-order model is $\alpha$-competitive, if for the assignment $\x$ of the algorithm we have $\E{v(\x)} \geq (\alpha - o(1)) \, v(\x^*)$ for every instance of the problem. The expectation is taken over the random permutation and over the randomized decisions of the algorithm.

\subsection{Knapsack and Fractional Knapsack}

The knapsack problem arises as a special case of GAP for $m=1$. Given are a knapsack capacity $C > 0$ and a set of items $[n]$ where each item $j \in [n] $ has a value $v_j > 0$ and a size $s_j > 0$. The goal is to find a subset of items that maximizes the total value while the total size does not exceed the capacity. We use the same notation as for GAP and denote a subset of items by a binary assignment $\x \in \{0,1\}^n$ with $\x = (x_1,\dotsc,x_n)$ and an optimal knapsack solution by $$\x^* \in \arg\max\Biggl\{ v(\x) \;\Bigg\vert\; \x \in \{0,1\}^n, \; \sum_{j=1}^n s_j \, x_j \leq C\Biggr\}.$$
Further, we define the density of an item $j \in [n]$ by $d_j = v_j/s_j$ and we assume without loss of generality that $ 0 < s_j  \leq C$ for all items $j \in [n]$.
We further assume that any ordering of the items by value or density yields a unique ordering. This could be achieved by arbitrary small perturbations of the values and does not affect our results, but makes the presentation easier. 
For the fractional variant of the knapsack problem, we use $\x \in [0,1]^n$ to denote a fractional assignment. We further assume that the items are sorted by their density in non-increasing order, i.e., $d_1 > d_2 > \cdots > d_n$.
Additionally, let $\rho = \max\{q \in [n] \mid \sum_{p=1}^{q} s_p \leq C \}$ denote the largest index, such that the first $\rho$ items do not exceed the knapsack capacity. By $\tilde{\x}$ we denote the fractional greedy solution which is defined by
\begin{equation*}
    \tilde{x}_j = \begin{cases} 1 & \text{if } j \leq \rho,\\ \dfrac{C - \sum_{q=1}^{\rho} s_q}{s_j} & \text{if } j=\rho+1, \qquad \text{for every } j \in [n]. \\ 0 & \text{otherwise},\end{cases}
\end{equation*}
Note that the fractional greedy solution is uniquely defined under our assumptions and that it defines an optimal solution of the fractional knapsack problem \cite[\S~17.1]{KorteV18}. We state some properties of fractional greedy solution for later reference.

\begin{proposition}
    \label{pro:fractional-greedy-solution}
    For the fractional greedy solution $\tilde{\x}$, it holds that
    \begin{enumerate}
        \item $\tilde{\x} = \arg\max\{ v(\x) \mid \x \in [0,1]^n, \; \sum_{j=1}^n s_j \, x_j \leq C\} $, \label{item:fractional-greedy-solution-optimality}
        \item $\sum_{j=1}^n s_j \, \tilde{x}_j = \min\{C, \; \sum_{j=1}^n s_j\}$. \label{item:fractional-greedy-solution-total size}
    \end{enumerate}
\end{proposition}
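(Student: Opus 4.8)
The plan is to prove the size identity in part~\ref{item:fractional-greedy-solution-total size} first, since it is a direct consequence of the definition of $\rho$, and then to derive optimality in part~\ref{item:fractional-greedy-solution-optimality} from a strengthened, prefix-wise version of that identity via an Abel summation. Throughout I use the standing assumption $s_j \le C$, which guarantees $\rho \ge 1$ so that $\rho$ and the partially packed index $\rho+1$ are well defined in the relevant case. For bookkeeping I write $S_k(\x) = \sum_{j=1}^{k} s_j\,x_j$ for the cumulative size of the first $k$ items under an assignment $\x$, with $S_0(\x) = 0$.

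For part~\ref{item:fractional-greedy-solution-total size} I would distinguish two cases. If $\sum_{j=1}^{n} s_j \le C$, then all items fit, so $\rho = n$, every $\tilde{x}_j = 1$, and $\sum_{j=1}^{n} s_j\,\tilde{x}_j = \sum_{j=1}^{n} s_j = \min\{C,\sum_{j=1}^{n} s_j\}$. If instead $\sum_{j=1}^{n} s_j > C$, then $\rho < n$, the item $\rho+1$ exists, and by maximality of $\rho$ we have $\sum_{q=1}^{\rho} s_q \le C$; substituting the definition of $\tilde{x}_{\rho+1}$ yields $\sum_{q=1}^{\rho} s_q + s_{\rho+1}\cdot\frac{C-\sum_{q=1}^{\rho}s_q}{s_{\rho+1}} = C = \min\{C,\sum_{j=1}^{n} s_j\}$. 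The same computation applied to an arbitrary prefix shows the sharper fact that $S_k(\tilde{\x}) = \min\{\sum_{j=1}^{k} s_j,\,C\}$ for every $k \in [n]$, which is the form I actually need below.

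For part~\ref{item:fractional-greedy-solution-optimality} I would exploit that $d_1 > d_2 > \cdots > d_n > 0$ and write $v_j = d_j s_j$, so that for any feasible $\x$,
\[
v(\x) = \sum_{j=1}^{n} d_j\bigl(S_j(\x) - S_{j-1}(\x)\bigr) = \sum_{j=1}^{n-1} (d_j - d_{j+1})\,S_j(\x) + d_n\,S_n(\x).
\]
This expresses $v(\x)$ as a combination of the cumulative sizes $S_1(\x),\dots,S_n(\x)$ with strictly positive coefficients $d_j - d_{j+1} > 0$ and $d_n > 0$. Now every feasible $\x \in [0,1]^n$ satisfies $S_k(\x) \le \sum_{j=1}^{k} s_j$ (since $x_j \le 1$) and $S_k(\x) \le S_n(\x) \le C$ (since $x_j \ge 0$ and the capacity holds), hence $S_k(\x) \le \min\{\sum_{j=1}^{k} s_j,\,C\} = S_k(\tilde{\x})$ by the prefix identity. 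Comparing the two weighted sums term by term gives $v(\x) \le v(\tilde{\x})$, so $\tilde{\x}$ is optimal.

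The main point—and the place where the strictness of the density ordering is essential—is upgrading optimality to the claimed equality $\tilde{\x} = \arg\max\{\cdots\}$, i.e. uniqueness. Because all coefficients in the Abel decomposition are strictly positive, any feasible $\x$ with $v(\x) = v(\tilde{\x})$ must satisfy $S_k(\x) = S_k(\tilde{\x})$ for every $k$, and taking successive differences forces $s_j x_j = S_j(\x) - S_{j-1}(\x) = S_j(\tilde{\x}) - S_{j-1}(\tilde{\x}) = s_j \tilde{x}_j$, whence $x_j = \tilde{x}_j$ for all $j$. I expect the only delicate step to be the clean verification of the prefix inequality $S_k(\x) \le \min\{\sum_{j\le k} s_j, C\}$ and its tightness for $\tilde{\x}$; everything else is routine algebra.
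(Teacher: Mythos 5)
Your proof is correct. It is worth noting that the paper does not actually prove this proposition: it states it and defers to the textbook treatment of the fractional knapsack problem (Korte and Vygen, \S 17.1), where optimality of the density-greedy solution is classically obtained by an exchange argument (any feasible solution putting more mass on a lower-density item and less on a higher-density one can be improved by swapping mass between them). Your route is self-contained and genuinely different: you first strengthen part~(2) to the prefix identity $\sum_{j=1}^{k} s_j \tilde{x}_j = \min\{\sum_{j=1}^{k} s_j,\, C\}$ for every $k$, and then the Abel summation
\begin{equation*}
v(\x) = \sum_{k=1}^{n-1}(d_k - d_{k+1})\,S_k(\x) + d_n\,S_n(\x),
\end{equation*}
whose coefficients are strictly positive by the strict density ordering, converts the prefix-sum dominance $S_k(\x)\le S_k(\tilde{\x})$ of every feasible $\x$ directly into $v(\x)\le v(\tilde{\x})$. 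This buys something the citation leaves implicit: uniqueness of the maximizer falls out for free, since equality forces all prefix sums, hence all coordinates, to agree; and uniqueness is exactly what licenses the paper's notation $\tilde{\x} = \arg\max\{\cdots\}$ and lets \nalgFractionalKP\ treat $\tilde{\x}(Q_\ell)$ as a single well-defined object in each round. Two small wording points, neither a gap: the inequality $\sum_{q=1}^{\rho} s_q \le C$ follows from $\rho$ lying in the defining set, not from its maximality --- maximality is what yields $\sum_{q=1}^{\rho+1} s_q > C$, which you need both for $\tilde{x}_{\rho+1}\le 1$ and for the prefix identity at indices $k>\rho$; and in the uniqueness step you divide by $s_j$, which is legitimate because the paper assumes $s_j>0$.
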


\section{The Generalized Assignment Problem}
\label{sec:gap}

\begin{algorithm}[t]
    \caption{\nalgInfGAP}
    \label{Alg:InfeasibleGAP}
    \begin{algorithmic}[1]
    \Require random permutation $\pi$, number of items $n$, set of bins $[m]$ with capacities $C_i, i \in [m]$
    \Ensure assignment $\x$ satisfying GAP constraints \eqref{eq:GapConstraint2}--\eqref{eq:GapConstraint3}
        \State $\x \gets 0$
        \State $Q_0 \gets \emptyset$
        \For{rounds $ \ell = 1,\dotsc,n $}
            \State $Q_\ell \gets Q_{\ell-1} \cup \{\pi(\ell)\}$
            \If{$ \ell > t \gets \lfloor n/2 \rfloor $}
                \State $ \tilde{\x}(Q_\ell) \gets $ optimal fractional assignment of revealed items $Q_\ell$
                \State $i^{(\ell)} \gets $ select a bin $i \in [m]$ where $ \P{i^{(\ell)} = i} = \tilde{x}_{i,\pi(\ell)}(Q_\ell) $ ($i^{(\ell)} = 0$ if none)
                \State Define $\x^{(\ell)}$ with $x_{i,j}^{(\ell)} = \begin{cases}
                    1 \text{ if } i=i^{(\ell)}, j=\pi(\ell), \\
                    0 \text{ otherwise }
                \end{cases}
                \text{ for } i \in [m], j \in [n]$.
                \If{$ i^{(\ell)} = 0 \textbf{ or } \sum_{k=1}^{\ell-1} s_{i^{(\ell)},\pi(k)} \, x_{i^{(\ell)},\pi(k)} \leq C_{i^{(\ell)}} $} 
                    \State $ \x \gets \x + \x^{(\ell)}$
                \EndIf
            \EndIf
        \EndFor
        \State \Return $\x$
    \end{algorithmic}
\end{algorithm}	

In this section, we study the generalized assignment problem in the random-order model. We improve over the currently best-known competitive ratio of $1/6.99$ shown independently by \citet{AlbersKL21} and \citet{NaoriR19}. We state our main result of this section in the following theorem.

\begin{theorem}
    \label{theo:gap}
    There exists an $\alpha$-competitive randomized algorithm for the generalized assignment problem in the random-order model where $\alpha = \frac{1-\ln(2)}{2} \approx \frac{1}{6.52}$.
\end{theorem}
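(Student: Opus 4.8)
The plan is to prove the theorem in two stages. First I would analyze \nalgInfGAP\ (\Cref{Alg:InfeasibleGAP}) and show that its output $\x$ --- which satisfies \eqref{eq:GapConstraint2}--\eqref{eq:GapConstraint3} but may overfill each bin by a single item --- has expected value at least $(1-\ln(2))\,v(\x^*)$. Second, I would convert this preliminary assignment into a genuinely feasible one at the cost of a factor $1/2$, yielding the claimed ratio $(1-\ln(2))/2$.

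For the first stage, fix a round $\ell > t = \lfloor n/2\rfloor$ and condition on the revealed set $Q_\ell$ and the identity of $\pi(\ell)$. Since $\pi(\ell)$ is uniform on $Q_\ell$, the standard random-order argument gives $\E{\v_{\pi(\ell)}^\top \tilde{\x}_{\pi(\ell)}(Q_\ell) \mid Q_\ell} = v(\tilde{\x}(Q_\ell))/\ell$; moreover, restricting the global fractional optimum $\tilde{\x}$ to the uniformly random $\ell$-subset $Q_\ell$ shows $\E{v(\tilde{\x}(Q_\ell))} \ge (\ell/n)\,v(\tilde{\x})$, so the expected value the algorithm tentatively proposes in round $\ell$ is at least $v(\tilde{\x})/n$. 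The substantive step is to lower-bound the probability that this proposal is accepted, i.e.\ that $\sum_{k<\ell} s_{i^{(\ell)},\pi(k)}x_{i^{(\ell)},\pi(k)} \le C_{i^{(\ell)}}$. Here the key observation is that, conditioned on $Q_\ell$ and $\pi(\ell)$, the bin choice $i^{(\ell)}$ uses fresh randomness and is independent of the history on rounds $1,\dots,\ell-1$, which is a uniformly random permutation of $Q_{\ell-1}$. Hence I can bound the expected load of any fixed bin $i$ before round $\ell$ by $\sum_{k=t+1}^{\ell-1}\E{s_{i,\pi(k)}\tilde{x}_{i,\pi(k)}(Q_k)} \le \sum_{k=t+1}^{\ell-1} C_i/k$, using that $\pi(k)$ is uniform on $Q_k$, that the actual assignment is dominated by the proposed one, and that $\tilde{\x}(Q_k)$ respects the capacity of bin $i$. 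Markov's inequality then yields an acceptance probability of at least $1 - \sum_{k=t+1}^{\ell-1} 1/k$.

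Combining the two estimates and summing over $\ell$ gives
\[
    \E{v(\x)} \ge \frac{v(\tilde{\x})}{n}\sum_{\ell=t+1}^{n}\Bigl(1-\sum_{k=t+1}^{\ell-1}\tfrac{1}{k}\Bigr).
\]
With $t=\lfloor n/2\rfloor$, the inner sum is close to $\ln(\ell/t)=\ln(2\ell/n)$, which stays below $\ln(2)<1$, so the bound is never vacuous; passing to the integral $\int_{1/2}^1 (1-\ln(2x))\,dx = 1-\ln(2)$ and using $v(\tilde{\x}) \ge v(\x^*)$ establishes that \nalgInfGAP\ is $(1-\ln(2))$-competitive, up to the $o(1)$ discretization error.

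For the second stage, I would exploit that only the last item placed in each bin can break the capacity: by construction an item is added to bin $i$ only when the load before it is at most $C_i$, so discarding the single last item leaves load at most $C_i$, while that last item alone has size $s_{i,\cdot}\le C_i$ and thus also fits. Writing the value of bin $i$ in the infeasible assignment as $v_i^{\mathrm{head}}+v_i^{\mathrm{last}}$, both the ``all-but-last'' and the ``last-only'' assignments are feasible, and their values sum to $v(\x)$. Running the first with probability $1/2$ and the second with probability $1/2$ therefore produces a feasible assignment of expected value $\tfrac{1}{2}\E{v(\x)} \ge \tfrac{1-\ln(2)}{2}\,v(\x^*)$, as claimed. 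I expect the main obstacle to be the acceptance-probability bound: one must argue carefully that conditioning on the event $i^{(\ell)}=i$ does not distort the load distribution on the earlier rounds, and track the $o(1)$ errors incurred by replacing harmonic sums with logarithms and a discrete sampling threshold by a continuous one.
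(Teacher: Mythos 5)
Your plan follows the paper's proof essentially step for step in its first stage: the tentative-value bound, the Markov bound on the load of a fixed bin, and the combination with $t=\lfloor n/2\rfloor$ are exactly the paper's three lemmas, and your integral evaluation $\int_{1/2}^{1}(1-\ln(2x))\,\mathrm{d}x = 1-\ln(2)$ is a legitimate asymptotic substitute for the paper's exact harmonic-number computation, since the definition of competitiveness tolerates an additive $o(1)$. Stage one is correct, including the point you rightly flag as delicate: the Markov bound holds conditionally on \emph{every} fixed bin $i$ and every realization of $Q_{\ell-1}$, so it is unaffected by the conditioning on $Q_\ell$, $\pi(\ell)$, and $i^{(\ell)}$.

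The gap is in stage two. You define ``all-but-last'' and ``last-only'' as a decomposition of the \emph{output} of \nalgInfGAP\ and then assert that one can ``run'' each with probability $1/2$, but neither is, as you define it, an online algorithm: when an item arrives, no algorithm can know whether it will turn out to be the last item placed in its bin (this is unknowable whenever the bin is never overfilled), so ``discard the last item'' is not an implementable rule. What is implementable is (a) accept an item only if it actually fits --- the paper's \nalgFeasGAP\ --- and (b) accept, for each bin, only the \emph{first} item that fails to fit; but (b) requires the algorithm to know what the feasible run would have done, i.e., to simulate it virtually while committing nothing. This virtual simulation is the paper's \nalgImitGAP, and it is possible precisely because the fractional solutions $\tilde{\x}(Q_\ell)$ and the bin draws $i^{(\ell)}$ depend only on the revealed items and on fresh coin flips, never on which items were actually committed. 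For the same reason \nalgFeasGAP\ and \nalgImitGAP\ can be coupled to the same permutation and the same realization of bin selections, and only under this coupling does your per-realization identity ``head value plus last value equals $v(\x)$'' (in the paper, the bin-by-bin inequality $v(\y)+v(\z)\ge v(\x)$) turn into the expected-value bound of $\tfrac{1}{2}\,\E{v(\x)} \ge \tfrac{1-\ln(2)}{2}\,v(\OPT)$ for the randomized combination. These two points --- redefining ``last'' as the overfilling item rather than the literal last item, and realizing the ``last-only'' rule online via imitation with coupled randomness --- are the actual content of the paper's proof of \Cref{theo:gap} beyond your stage one, and they are missing from your argument.
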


We obtain the result by analysing a variant of an algorithm proposed by Kesselheim et al.~\cite{KesselheimRTV18}. The original algorithm consists of a sampling phase and an assignment phase. The sampling phase lasts for $t \in [n]$ rounds and during these rounds, the items are not assigned to a bin. Afterwards, in the assignment phase, the algorithm computes in each round $\ell > t$ the optimal fractional assignment $\tilde{\x}(Q_\ell)$ of all revealed items $Q_\ell$. Then it uses the fractional assignment of the current item $\pi(\ell)$ as a probability distribution over bins to randomly determine a bin $i^{(\ell)}$. Finally, the item $\pi(\ell)$ is assigned to the bin $i^{(\ell)}$ if this does not violate the capacity constraint of the bin. 

To achieve a constant competitive ratio for this algorithm, the authors of \cite{KesselheimRTV18} restrict the options $(v_{i,j}, s_{i,j})$ in the GAP instance to those with $s_{i,j} \leq C_{i}/2$. This allows to bound the probability that item $\pi(\ell)$ can be assigned to the bin $i^{(\ell)}$ by the probability that the total size of items assigned to the bin in previous rounds is at most $ C_{i^{(\ell)}}/2 $, which particularly is independent of $s_{i^{(\ell)},\pi(\ell)}$. The instance with the remaining options can be seen as an edge-weighted matching problem in the random-order model and is covered with a suitable algorithm. In the end, each algorithm is executed with a certain probability.

To improve the competitive ratio, Albers et al.~\cite{AlbersKL21} and Naori and Raz~\cite{NaoriR19} combined two algorithms for the different options by running them during different rounds of a single algorithm.

Our version of the algorithm, presented in \Cref{Alg:InfeasibleGAP}, does not restrict the instance. Instead, we allow the algorithm to exceed the capacity of every bin by at most one item. More precisely, we permit the assignment of item $\pi(\ell)$ to the bin $i^{(\ell)}$, if the total size of items assigned to the bin in previous rounds is at most $ C_{i^{(\ell)}} $. It turns out, that this is sufficient to handle all options $(v_{i,j}, s_{i,j})$ together, but the assignment returned by \Cref{Alg:InfeasibleGAP} may be infeasible. Therefore, we split the algorithm at the end of this chapter into two separate algorithms. The first algorithm does the same as \Cref{Alg:InfeasibleGAP}, but maintains feasibility by assigning items only if they do not violate the capacity constraint of the bin. Then, the second algorithm imitates the first algorithm and assigns the first item to each bin that the first algorithm could not assign due to the capacity constraint.

For the analysis of \Cref{Alg:InfeasibleGAP} let $\x$ denote the final assignment of the algorithm, let $t \in [n]$, and consider a fixed round $\ell > t$. As in the algorithm, we let $i^{(\ell)}$ denote the bin that the algorithm selects randomly and we let $\x^{(\ell)}$ denote the tentative assignment. Let $V^{(\ell)}$ denote the value that is obtained in round $\ell > t$. If the assignment of the tentative allocation $\x^{(\ell)}$ is successful, then $V^{(\ell)}$ equals $v(\x^{(\ell)})$, and if the assignment is not successful, then $V^{(\ell)}$ equals $0$. To bound the expected value of $V^{(\ell)}$, we determine the random order up to round $\ell$ in three steps as done by Kesselheim et al.~\cite{KesselheimRTV18}: (i) select a random subset $Q_\ell$ of $[n]$ with $ |Q_\ell| = \ell$ which are the items revealed in the first $\ell$ rounds; (ii) the item that is revealed in round $\ell$ is drawn uniformly at random from the set $Q_\ell$; (iii) the order of the remaining $\ell - 1$ items is determined by drawing an item uniformly at random from the remaining items.

By steps (i) and (ii) together with the random decision of the algorithm in round $\ell$, we can bound the expected value of the tentative assignment in round $\ell$ in terms of an optimal offline assignment. The crux is that the probability that the assignment is successful in round $\ell$ is bounded independently to the outcomes of steps (i) and (ii) and the random decision in round $\ell$. This allows us to bound the expected value of $V^{(\ell)}$ by the bound we derive for the probability that the assignment is successful times the expected value we obtain for the tentative assignment. In the following lemma, we obtain the bound for the value of the tentative assignment. 

\begin{lemma}
    \label{lemma:gap-fractional-value}
    For a round $\ell \in \{t+1,\dotsc,n\}$, let $Q_{\ell}$ be a random subset of $[n]$ with $|Q_{\ell}| = \ell$ and let $\pi(\ell)$ be an item drawn uniformly at random from $Q_{\ell}$. For the tentative assignment $\x^{(\ell)}$ we have
    \begin{equation*}
        \bigE{v(\x^{(\ell)})} \geq \frac{1}{n} \, v(\x^\star).
    \end{equation*}
\end{lemma}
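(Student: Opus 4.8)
The plan is to evaluate $\bigE{v(\x^{(\ell)})}$ through exactly the three-stage decomposition of the random order described before the lemma, and then to relate the optimal fractional value on a random revealed set to the offline optimum $\x^\star$. I would first condition on the revealed set $Q_\ell$ and on the identity of the round-$\ell$ item, say $\pi(\ell) = j \in Q_\ell$. Given this information, the algorithm draws the bin $i^{(\ell)} = i$ with probability $\tilde{x}_{i,j}(Q_\ell)$ and selects no bin with the remaining probability, so the tentative assignment contributes value $v_{i,j}$ precisely when bin $i$ is chosen. Hence
\[
    \bigE{v(\x^{(\ell)}) \,\big|\, Q_\ell, \pi(\ell) = j} = \sum_{i=1}^{m} v_{i,j} \, \tilde{x}_{i,j}(Q_\ell).
\]

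Second, I would average over the uniform choice of $\pi(\ell)$ from $Q_\ell$. Since $\tilde{\x}(Q_\ell)$ is supported only on items of $Q_\ell$, summing the previous display over $j \in Q_\ell$ recovers the full fractional value, giving
\[
    \bigE{v(\x^{(\ell)}) \,\big|\, Q_\ell} = \frac{1}{\ell} \sum_{j \in Q_\ell} \sum_{i=1}^{m} v_{i,j} \, \tilde{x}_{i,j}(Q_\ell) = \frac{1}{\ell} \, v\bigl(\tilde{\x}(Q_\ell)\bigr).
\]

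The crux is the third step: lower bounding $v(\tilde{\x}(Q_\ell))$ by the offline optimum restricted to $Q_\ell$. Restricting $\x^\star$ to the columns indexed by $Q_\ell$ yields an assignment that still satisfies the capacity constraints \eqref{eq:GapConstraint1} — deleting items only decreases each bin's load — as well as the per-item constraints \eqref{eq:GapConstraint2}, so it is feasible for the relaxation \eqref{eq:GapConstraint1}, \eqref{eq:GapConstraint2}, \eqref{eq:GapRelaxationConstraint} over the items $Q_\ell$. By optimality of the fractional assignment $\tilde{\x}(Q_\ell)$ we therefore obtain
\[
    v\bigl(\tilde{\x}(Q_\ell)\bigr) \geq \sum_{j \in Q_\ell} \sum_{i=1}^{m} v_{i,j} \, x^\star_{i,j}.
\]
I expect this domination inequality to be the main (though routine) obstacle, as it is the only place where the structure of the LP relaxation and the monotonicity of the capacity constraints under item deletion is used; the remaining steps are linearity of expectation.

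Finally, I would take the expectation over the uniformly random set $Q_\ell$ of size $\ell$. Each item $j$ lies in $Q_\ell$ with probability $\ell/n$, so by linearity
\[
    \BigE{\sum_{j \in Q_\ell} \sum_{i=1}^{m} v_{i,j} \, x^\star_{i,j}} = \frac{\ell}{n} \, v(\x^\star).
\]
Chaining the four displays via the tower property then yields
\[
    \bigE{v(\x^{(\ell)})} = \BigE{\tfrac{1}{\ell} \, v\bigl(\tilde{\x}(Q_\ell)\bigr)} \geq \frac{1}{\ell} \cdot \frac{\ell}{n} \, v(\x^\star) = \frac{1}{n} \, v(\x^\star),
\]
which is the claimed bound. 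Note that this argument uses only steps (i) and (ii) of the order decomposition together with the in-round randomization, as anticipated in the text; the bound holds uniformly over the outcome of step (iii), which is exactly what the subsequent success-probability analysis will exploit.
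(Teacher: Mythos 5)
Your proof is correct and follows essentially the same route as the paper's: conditioning on $Q_\ell$ and $\pi(\ell)$ to show the tentative value equals $\tfrac{1}{\ell}\,v(\tilde{\x}(Q_\ell))$, dominating this by the restriction of $\x^\star$ to $Q_\ell$ via LP optimality, and averaging over the random set using $\P{j \in Q_\ell} = \ell/n$. The only difference is cosmetic ordering of the steps, plus the fact that you spell out the feasibility of the restricted $\x^\star$, which the paper leaves implicit.
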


\begin{proof}
    We first derive a bound for the optimal fractional assignment $ \tilde{\x}(Q_\ell) $ in terms of the optimal assignment $\OPT$. Since $Q_\ell$ is drawn uniformly at random from $[n]$, each item $j \in [n]$ is contained in $Q_\ell$ with a probability of $\frac{\ell}{n}$. We get
    \begin{align*}
        \bigE{v(\tilde{\x}(Q_\ell))} = \BiggE{\sum_{j \in Q_\ell} \v_j^\top \tilde{\x}_j(Q_\ell)} &\geq \BiggE{\sum_{j \in Q_\ell} \v_j^\top \x^\star_j} \\
        &= \sum_{j \in [n]} \P{j \in Q_\ell} \, \v_j^\top \x^\star_j = \frac{\ell}{n} \, v(\x^\star).
    \end{align*}
    For a fixed set $Q_\ell$ we can bound the value of the tentative assignment in round $\ell$ based on the randomly selected bin $i^{(\ell)}$ and the randomly drawn item $\pi(\ell)$. We have 
    \begin{align*}
        \bigE{v(\x^{(\ell)})} &= \sum_{j \in Q_\ell} \bigP{\pi(\ell) = j} \, \bigE{v(\x^{(\ell)}) \mid \pi(\ell) = j} \\
        &= \frac{1}{\ell} \sum_{j \in Q_\ell} \bigE{v(\x^{(\ell)}) \mid \pi(\ell) = j}
    \end{align*}
    and with
    \begin{align*}
        \bigE{v(\x^{(\ell)}) \mid \pi(\ell) = j} &= \sum_{i \in [m]} \bigP{i^{(\ell)} = i \mid \pi(\ell) = j} \, \bigE{v_{i^{(\ell)},\pi(\ell)} \mid \pi(\ell) = j, i^{(\ell)} = i} \\
        &= \sum_{i \in [m]} v_{i,j} \, \tilde{x}_{i,j}(Q_\ell),
    \end{align*}
    we get $\E{v(\x^{(\ell)})} = \frac{1}{\ell} \, v(\tilde{\x}(Q_\ell))$. Combining the equations completes the proof.
\end{proof}

The tentative assignment can be realized if the accumulated size of the selected bin has not exceeded the capacity in previous rounds. The next lemma shows a bound on the probability that the first $\ell-1$ assignments exceed the capacity of a bin $i \in [m]$. As in Kesselheim et al.~\cite{KesselheimRTV18}, we prove the bound for the tentative assignments in previous rounds instead of the realized assignments. The bound only depends on the order of the first $\ell - 1$ items and holds for every possible subset of items $Q_{\ell-1}$. Note that we slightly overload the notation here and use $Q_{\ell-1}$ as the set of items revealed in the first $\ell-1$ rounds and for the event that this set of items is revealed in the first $\ell - 1$ rounds.
 
\begin{lemma}
    \label{lemma:gap-success-bound}
    Consider a round $\ell \in \{t+1,\dotsc,n\}$ and a bin $i \in [m]$. Let $Q_{\ell-1}$ be any subset of $[n]$ with $|Q_{\ell-1}| = \ell-1$. We have 
    \begin{equation*}
        \BiggP{\sum_{k=t+1}^{\ell-1} s_{i, \pi(k)} \, x^{(k)}_{i, \pi(k)} > C_i \Biggm| Q_{\ell-1}} \leq \sum_{k=t+1}^{\ell-1} \frac{1}{k}.
    \end{equation*}
\end{lemma}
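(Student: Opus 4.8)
The plan is to sidestep the delicate round-by-round ``first time the capacity is exceeded'' union bound and instead reduce the whole statement to a single application of Markov's inequality to the \emph{total} tentative size placed on bin~$i$. Write $Y_k = s_{i,\pi(k)}\, x^{(k)}_{i,\pi(k)}$ for the size that round~$k$ would add to bin~$i$ under the tentative assignment; by the definition of $\x^{(k)}$ we have $Y_k = s_{i,\pi(k)}$ exactly when $i^{(k)}=i$ and $Y_k=0$ otherwise, so the quantity in the statement is $S_{\ell-1} := \sum_{k=t+1}^{\ell-1} Y_k$. Since $S_{\ell-1}\ge 0$, Markov's inequality gives $\P{S_{\ell-1} > C_i \mid Q_{\ell-1}} \le \Exp[S_{\ell-1}\mid Q_{\ell-1}]/C_i$, and the claim would follow at once from the expectation bound $\Exp[S_{\ell-1}\mid Q_{\ell-1}] \le C_i \sum_{k=t+1}^{\ell-1} \frac1k$.

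The core estimate is a per-round bound $\Exp[Y_k \mid Q_k] \le C_i/k$, obtained in exactly the same way as in \Cref{lemma:gap-fractional-value} by combining the two sources of randomness in round~$k$. Conditioned on the revealed set $Q_k$, the current item $\pi(k)$ is uniform over $Q_k$, and given $\pi(k)=j$ the algorithm sets $i^{(k)}=i$ with probability $\tilde{x}_{i,j}(Q_k)$; these are independent, so $\Exp[Y_k\mid Q_k] = \frac1k\sum_{j\in Q_k} s_{i,j}\,\tilde{x}_{i,j}(Q_k)$. The remaining sum is precisely the size that the optimal fractional assignment $\tilde{\x}(Q_k)$ places on bin~$i$, which obeys the capacity constraint~\eqref{eq:GapConstraint1} and is therefore at most $C_i$; the per-round bound follows.

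Finally I would lift this to the conditioning on $Q_{\ell-1}$ by the tower rule: for fixed $Q_{\ell-1}$ the set $Q_k$ is a uniformly random $k$-subset of $Q_{\ell-1}$, and since $Y_k$ is a function of the first $k$ rounds only, $\Exp[Y_k \mid Q_{\ell-1}] = \Exp[\Exp[Y_k\mid Q_k]\mid Q_{\ell-1}] \le C_i/k$. Summing over $k=t+1,\dots,\ell-1$ and dividing by $C_i$ completes the argument. The one point that needs care --- and the only real obstacle --- is the conditioning bookkeeping: one must check that it is the \emph{tentative} assignments $x^{(k)}$, not the realized ones, that appear in $S_{\ell-1}$, so that $Y_k$ is a clean function of round~$k$'s randomness with no dependence on whether earlier assignments were actually accepted. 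This is exactly why the lemma is phrased in terms of the tentative sizes, and it is what makes the independence used in the per-round computation legitimate.
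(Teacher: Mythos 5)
Your proposal is correct and follows essentially the same route as the paper's proof: Markov's inequality on the total tentative size, linearity over rounds, and the per-round bound $\Exp[s_{i,\pi(k)}\,x^{(k)}_{i,\pi(k)} \mid Q_k] = \frac{1}{k}\sum_{j \in Q_k} s_{i,j}\,\tilde{x}_{i,j}(Q_k) \leq C_i/k$ via feasibility of the fractional assignment, with your explicit tower-rule step merely spelling out what the paper leaves implicit when it says the bound holds for an arbitrary $k$-subset $Q_k$ of $Q_{\ell-1}$. The only nitpick is your word ``independent'': the bin choice is not independent of the revealed item (its distribution is $\tilde{x}_{i,\pi(k)}(Q_k)$, which depends on $\pi(k)$); what you actually use, and correctly compute with, is the conditional law of $i^{(k)}$ given $\pi(k)=j$, so the argument stands as written.
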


\begin{proof}
    Applying Markov's inequality on the probability, we get
    \begin{equation*}
        \BiggP{\sum_{k=t+1}^{\ell-1} s_{i, \pi(k)} \, x^{(k)}_{i, \pi(k)} > C_i \Biggm| Q_{\ell-1}} \leq \frac{1}{C_i} \, \BiggE{\sum_{k=t+1}^{\ell-1} s_{i, \pi(k)} \, x^{(k)}_{i, \pi(k)} \Biggm| Q_{\ell-1}}.
    \end{equation*}
    By linearity of expectation, we can consider each round $k \in \{t+1,\dotsc,\ell-1\} $ separately. Let $Q_k$ be an arbitrary subset of $Q_{\ell-1}$ with $|Q_k| = k$. Due to the random order, each item $j \in Q_k$ is revealed in round $k$ with probability $\frac{1}{k}$. We get
    \begin{align*}
        \bigE{s_{i, \pi(k)} \, x^{(k)}_{i, \pi(k)} \mid Q_k} &= \sum_{j \in Q_k} \bigP{\pi(k) = j \mid Q_k} \, \bigE{s_{i, \pi(k)} \, x^{(k)}_{i, \pi(k)} \mid Q_k, \pi(k) = j}\\
        &= \frac{1}{k} \sum_{j \in Q_k} s_{i,j} \, \bigE{x^{(k)}_{i, j} \mid Q_k, \pi(k) = j}.
    \end{align*}
    Since $\pi(k)=j$ the algorithm sets $x^{(k)}_{i, j} = 1$ if $i^{(k)} = i$ and $x^{(k)}_{i, j} = 0$ otherwise. Therefore, we get
    \begin{equation*}
        \bigE{x^{(k)}_{i, j} \mid Q_k, \pi(k) = j} = \bigP{i^{(k)} = i \mid Q_k, \pi(k) = j} = \tilde{x}_{i,j}(Q_k).
    \end{equation*}
    By the feasibility of the fractional assignment $\tilde{\x}(Q_k)$, we conclude that
    \begin{equation*}
        \bigE{s_{i, \pi(k)} \, x^{(k)}_{i, \pi(k)} \mid Q_k} = \frac{1}{k} \sum_{j \in Q_k} s_{i,j} \, \tilde{x}_{i,j}(Q_k) \leq \frac{C_i}{k},
    \end{equation*}
    proving the statement of the lemma.
\end{proof}

With the two lemmas at hand, we proceed to bound the expected value of the assignment returned by \Cref{Alg:InfeasibleGAP}.

\begin{lemma}
    \label{lemma:gap-infalg-bound}
    Let $\x$ be the infeasible assignment computed by \Cref{Alg:InfeasibleGAP}. Then, $\E{v(\x)} \geq (1 - \ln(2)) \, v(\OPT)$.
\end{lemma}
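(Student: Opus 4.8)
The plan is to decompose the total value by rounds and to bound each round's contribution as a product of a "value" factor and a "feasibility" factor. Writing $V^{(\ell)}$ for the value collected in round $\ell$, so that $V^{(\ell)} = v(\x^{(\ell)})$ when the tentative assignment of round $\ell$ is actually carried out and $V^{(\ell)} = 0$ otherwise, linearity of expectation gives $\E{v(\x)} = \sum_{\ell=t+1}^{n}\E{V^{(\ell)}}$. The goal for each summand is to establish
\[
    \E{V^{(\ell)}} \ge \Bigl(1 - \sum_{k=t+1}^{\ell-1}\tfrac1k\Bigr)\,\frac{1}{n}\,v(\OPT),
\]
combining the two lemmas already proved.

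To get the per-round bound I would condition on the three-step decomposition of the random order: the revealed set $Q_\ell$ (step (i)), the current item $\pi(\ell)=j$ (step (ii)), the internally chosen bin $i^{(\ell)}=i$ (the algorithm's coin flip), and then take the expectation over the ordering of the remaining items $Q_{\ell-1}=Q_\ell\setminus\{j\}$ (step (iii)). Given all of this, $v(\x^{(\ell)}) = v_{i,j}$ is fixed, while the event that round $\ell$ succeeds is exactly the event that bin $i$ has accumulated size at most $C_i$ after round $\ell-1$. The crucial observation is that $i^{(\ell)}$ is selected using only $Q_\ell$ and $\pi(\ell)$ via $\tilde{x}_{i,\pi(\ell)}(Q_\ell)$, so conditioned on $Q_{\ell-1}$ the ordering in rounds $t+1,\dots,\ell-1$ remains uniform and carries no information tying it to $i$; moreover the realized assignment is dominated by the tentative one, since an item enters $\x$ only when its tentative assignment succeeds, so the realized size of bin $i$ never exceeds the tentative size controlled in \Cref{lemma:gap-success-bound}. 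Hence the conditional success probability is at least $1 - \sum_{k=t+1}^{\ell-1}\frac1k$, uniformly in $i$, $j$, and $Q_\ell$. Pulling this uniform factor out and recognizing that the remaining average of $v_{i,j}$ over the algorithm's choices is precisely the quantity evaluated in \Cref{lemma:gap-fractional-value} yields the displayed per-round inequality.

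It then remains to sum this expression over $\ell$. With $t=\lfloor n/2\rfloor$ and the standard estimate $\sum_{k=t+1}^{\ell-1}\frac1k = \ln(\ell/t) + o(1)$, the sum $\frac1n\sum_{\ell=t+1}^{n}\bigl(1-\sum_{k=t+1}^{\ell-1}\frac1k\bigr)$ is a Riemann sum that converges to $\int_{1/2}^{1}\bigl(1-\ln(2x)\bigr)\,dx$. A short computation gives $\int_{1/2}^{1}\ln(2x)\,dx = \ln 2 - \tfrac12$, so the integral equals $\tfrac12 - (\ln 2 - \tfrac12) = 1-\ln 2$, which delivers $\E{v(\x)} \ge (1-\ln 2)\,v(\OPT)$ up to the $o(1)$ discretization error absorbed into the competitive-ratio definition.

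The main obstacle is the factorization in the second paragraph: one must argue carefully that the factor $1-\sum_{k=t+1}^{\ell-1}\frac1k$ may be detached from the value factor. This rests on two independence facts — that the bin choice $i^{(\ell)}$ uses no information about the ordering of $Q_{\ell-1}$, and that \Cref{lemma:gap-success-bound} holds for every fixed $Q_{\ell-1}$ and every bin $i$ — together with the monotonicity observation that the realized load is at most the tentative load, so that \Cref{lemma:gap-success-bound}, although stated for tentative assignments, also upper-bounds the failure probability of the realized process. Everything else is bookkeeping and the elementary integral.
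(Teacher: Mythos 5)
Your core argument is exactly the paper's: the round-by-round decomposition $\E{v(\x)} = \sum_{\ell=t+1}^n \E{V^{(\ell)}}$, the three-step conditioning that lets the success probability (uniform in $Q_\ell$, $j$, and the chosen bin $i$) be detached from the value factor of \Cref{lemma:gap-fractional-value}, and the resulting per-round bound $\E{V^{(\ell)}} \geq \frac{1}{n}\bigl(1-\sum_{k=t+1}^{\ell-1}\frac{1}{k}\bigr)\,v(\OPT)$. Your explicit remark that the realized load of a bin is dominated by the tentative load, so that \Cref{lemma:gap-success-bound} (stated for tentative assignments) indeed upper-bounds the failure probability of the actual algorithm, is precisely the point the paper makes when introducing that lemma; making it explicit is a virtue, not a deviation.

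The one genuine discrepancy is the final summation. The lemma claims the exact inequality $\E{v(\x)} \geq (1-\ln 2)\,v(\OPT)$ for every $n$, with no $o(1)$ term, and the paper proves it exactly: it rewrites the sum via the identity $\sum_{\ell=1}^{z} H_\ell = (z+1)H_z - z$ to obtain $\E{v(\x)} \geq \bigl(2 - \tfrac{2t}{n} + H_t - H_n\bigr)\,v(\OPT)$, and then verifies, separately for even and odd $n$ with one-sided integral comparisons, that $t=\lfloor n/2\rfloor$ makes this at least $1-\ln 2$. Your Riemann-sum limit only delivers $(1-\ln 2 - o(1))\,v(\OPT)$, which suffices for \Cref{theo:gap} (where the $o(1)$ is absorbed by the definition of competitiveness) but is strictly weaker than the lemma as stated. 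Moreover, upgrading your argument to the exact constant is routine but not entirely free of care: the naive one-sided bound $H_n - H_t \leq \ln(n/t)$ gives $2-\tfrac{2t}{n}-\ln(n/t)$, which for odd $n$ (where $t=\tfrac{n-1}{2}$) equals $1+\tfrac{1}{n}-\ln 2 - \ln\tfrac{n}{n-1}$ and lies \emph{below} $1-\ln 2$; the paper avoids this by first using $H_n = H_{n-1}+\tfrac1n$ and then comparing $H_{(n-1)/2}$ with $H_{n-1}$, so the even/odd case distinction (or an equally careful estimate) is genuinely needed to get the stated constant for all $n$.
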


\begin{proof}
    Recall that $V^{(\ell)}$ denotes the expected value obtained in round $\ell > t$, i.e., $\E{v(\x)} = \sum_{\ell = t+1}^{n} \bigE{V^{(\ell)}} $. If the tentative assignment is realized in round $\ell$, then we have $V^{(\ell)} = v(\x^{(\ell)})$ and otherwise, we have $V^{(\ell)} = 0$. To handle the dependency between the event that the assignment is realized and the obtained value, we assume that the items in the first $\ell$ rounds are determined in three steps, as explained at the beginning of this section: In step (i), we determine the items that are revealed in the first $\ell$ rounds by drawing a random subset $Q_\ell$ with $ |Q_\ell| = \ell$ from $[n]$. In step (ii), we determine the item $\pi(\ell)$ that is revealed in round $\ell$ by drawing an item uniformly at random from $Q_\ell$. Finally, in step (iii), we determine the order in which the remaining $\ell-1$ items appear in the first $\ell-1$ rounds.
    
    By \Cref{lemma:gap-fractional-value}, we can bound the expected value of the tentative assignment in round $\ell$ only depending on step (i), step (ii) and the random decision of the algorithm in round $\ell$, i.e., the selected bin in round $\ell$. \Cref{lemma:gap-success-bound} shows for every bin $i$ that the assignments of the first $\ell - 1$ rounds exceed the capacity with a probability of at most $ \sum_{k=t+1}^{\ell-1} \frac{1}{k}$. This bound only uses the random order of the first $\ell - 1$ items. Therefore, it does not depend on steps (i) and (ii) and since it holds for each bin, it also does not depend on the selected bin in round $\ell$. Therefore, we get
    \begin{align*}
        \E{v(\x)} = \sum_{\ell = t+1}^{n} \bigE{V^{(\ell)}} \geq \sum_{\ell = t+1}^{n} \frac{1}{n} \, \biggl(1 - \sum_{k=t+1}^{\ell-1} \frac{1}{k}\biggr) \, v(\OPT).
    \end{align*}
    Let $H_k = \sum_{i=1}^k \frac{1}{i}$ denote the $k$th Harmonic number. Then, we obtain
    \begin{align*}
        \E{v(\x)} &\geq \sum_{\ell = t+1}^{n} \frac{1}{n} \, \biggl(1 - H_{\ell-1} + H_t\biggr) \, v(\OPT) \\
        &= \Biggl( \frac{n-t}{n}(H_t+1) - \frac{1}{n} \sum_{\ell = t}^{n-1}  H_{\ell} \Biggr) \, v(\OPT)\\
        &= \Biggl( \frac{n-t}{n}(H_t+1) - \frac{1}{n} \sum_{\ell = 1}^{n-1}  H_{\ell} + \frac{1}{n} \sum_{\ell = 1}^{t-1}  H_{\ell}\Biggr) \, v(\OPT).
    \end{align*}
    Next, we use the identity $\sum_{\ell=1}^{z} H_{\ell} = (z+1)H_z - z$ for all $z \in \mathbb{N}$ taken from \cite[p.~10]{GreeneK90} and obtain
    \begin{align*}
        \E{v(\x)} &\geq \biggl( \frac{n-t}{n}(H_t+1) -  H_{n-1} + \frac{n-1}{n} + \frac{t}{n}  H_{t-1} - \frac{t-1}{n}\biggr)\, v(\OPT) \\
        &= \biggl(2 - \frac{2t}{n} + H_t - H_n\biggr)\, v(\OPT).
    \end{align*}
    It remains to show that for all $n \in \mathbb{N}$, there is $t \in [n]$ such that $2 - \frac{2t}{n} + H_t - H_n \geq 1 - \ln(2)$.
    We claim that this inequality is satisfied for $t = \lfloor n/2 \rfloor$. Indeed, for even $n$, we obtain
    \begin{multline*}
        2 - \frac{2t}{n} + H_t - H_n = 1 + H_{n/2} - H_n = 1 - \sum_{\ell=n/2+1}^{n} \frac{1}{\ell} \\
        \geq 1 - \int_{n/2}^n \frac{1}{\ell} \,\text{d}\ell = 1 - \ln(n) + \ln(n/2) = 1 - \ln(2).
    \end{multline*}
    For odd $n$, we obtain
    \begin{equation*}
        2 - \frac{2t}{n} + H_t - H_n = 1 + \frac{1}{n} + H_{(n-1)/2} - H_n = 1 + H_{(n-1)/2} - H_{n-1}
    \end{equation*}
    and further
    \begin{multline*}
        1 + H_{(n-1)/2} - H_{n-1} = 1 - \sum_{\ell=\frac{n-1}{2}+1}^{n-1} \frac{1}{\ell} \\ \geq 1 - \int_{\frac{n-1}{2}}^{n-1} \frac{1}{\ell} \,\text{d}\ell = 1 - \ln(n-1) + \ln((n-1)/2) = 1 - \ln(2),
    \end{multline*}
    which shows the claimed result.
\end{proof}

Recall that \Cref{Alg:InfeasibleGAP} adds an item to a bin if the capacity constraint of the bin is not violated \emph{before} adding the item. In order to obtain a feasible assignment, we consider the variant where items are only added to a bin if the capacity constraint is not violated \emph{after} adding the item; see \Cref{Alg:FeasibleGAP} in \Cref{app:gap-algorithms} for a formal description. Like this, \Cref{Alg:FeasibleGAP} obtains a feasible assignment, but we potentially loose the values contributed by the last items added to each bin in \Cref{Alg:InfeasibleGAP}. To compensate for this loss, we run \Cref{Alg:FeasibleGAP} only with probability $1/2$ and with the remaining probability, we run \Cref{Alg:ImitativeGAP} in \Cref{app:gap-algorithms}. 
The algorithm tries to assign to each bin the last item that \Cref{Alg:InfeasibleGAP} assigned to the bin. In order to achieve this, it mimics \Cref{Alg:FeasibleGAP} by creating an assignment in the same way as \Cref{Alg:FeasibleGAP}. The actual assignment of \Cref{Alg:ImitativeGAP} is created by assigning only the first item to each bin that would violate the capacity constraint of the imitative assignment.
To obtain our result, we run both algorithms with a probability of $1/2$; see \Cref{Alg:RandomGAP}.

\begin{algorithm}[t]
    \caption{\nalgRandGAP}
    \label{Alg:RandomGAP}
    \begin{algorithmic}[1]
    \Require random permutation $\pi$, number of items $n$, set of bins $[m]$ with capacities $C_i, i \in [m]$
    \Ensure assignment $\hat{\x}$ satisfying GAP constraints \eqref{eq:GapConstraint1}--\eqref{eq:GapConstraint3}
        \State $ X \gets \text{Bernoulli}(1/2)$
        \If{$X = 1$}
            \State run \nalgFeasGAP
        \Else
            \State run \nalgImitGAP
        \EndIf
    \end{algorithmic}
\end{algorithm}

For the proof of \Cref{theo:gap} we show that both algorithms together are as good as \Cref{Alg:InfeasibleGAP}. The idea is the following: if we fix a permutation $\pi$ and if we fix the random decisions of the algorithms for that permutation, i.e., fixing the bin that is selected in each round, then the statement follows immediately. The rest follows from linearity of expectation. 

\begin{proof}[Proof of \Cref{theo:gap}]
    Let $\x,\y,\z,\hat{\x}$ be the output of \nalgInfGAP, \nalgFeasGAP, \nalgImitGAP, and \nalgRandGAP, respectively. We claim that the result holds for \nalgRandGAP. It runs \nalgFeasGAP\ and \nalgImitGAP\ each with a probability of $1/2$. By summing over all possible permutations, we get
    \begin{equation*}
        \E{v(\hat{\x})} = \frac{1}{2} \, \E{v(\y)+v(\z)} = \frac{1}{2} \sum_{\pi' \in \Pi} \P{\pi = \pi'} \, \E{v(\y)+v(\z) \mid \pi = \pi'}.
    \end{equation*}
    For a fixed permutation, we also know that the optimal fractional assignment $\tilde{\x}(Q_\ell)$ is fixed in each round $\ell > t$ of the algorithms \nalgFeasGAP\ and \nalgImitGAP. For the random decisions of the algorithms let $R$ be a vector of $n-t$ independent random variables $R_{t+1},\dotsc,R_{n}$ with $\P{R_\ell = i} = \tilde{x}_{i,\pi(\ell)}(Q_\ell)$ and $\P{R_\ell = 0} = 1 - \sum_{i=1}^{m} \tilde{x}_{i,\pi(\ell)}(Q_\ell)$ for each round $\ell > t$ and $i \in [m]$. Thus, bin $i$ is selected in round $\ell > t$, if $R_\ell = i$. Further, let $\mathcal{R}$ denote the set of all possible such vectors. We get
    \begin{multline*}
        \E{v(\y)+v(\z) \mid \pi = \pi'} =  \sum_{R' \in \mathcal{R}} \P{R = R' \mid \pi = \pi'} \, \E{v(\y)+v(\z) \mid \pi = \pi', R = R'}.
    \end{multline*}
    Consider the algorithms \nalgInfGAP, \nalgFeasGAP\ and \nalgImitGAP\ with a fixed permutation $\pi'$ and random decisions $R'$. For $i \in [m]$, let $\ell_i^*$ be the round where \nalgInfGAP\ assigns the current item into bin $i$, i.e., $R'_{\ell_i^*} = i$, but the assignment exceeds the capacity of bin $i$. For all previous rounds $\ell < \ell_i^*$ we have $x_{i,\pi'(\ell)} = y_{i,\pi'(\ell)}$, for round $\ell^*$ we have $ x_{i,\pi'(\ell^*)} = z_{i,\pi'(\ell^*)} = 1 $, and in all subsequent rounds $\ell > \ell^*$ we have $x_{i,\pi'(\ell)} = 0$. Should $\ell_i^*$ not exist, then we have $x_{i,\pi'(\ell)} = y_{i,\pi'(\ell)}$ for every round $\ell > t$. Since this holds for every bin, we have
    \begin{equation*}
        \E{v(\y)+v(\z) \mid \pi = \pi', R = R'} \geq \E{v(\x) \mid \pi = \pi', R = R'}.
    \end{equation*}
    We can undo the transformations applied to $\E{v(\y)+v(\z)}$, since the conditional probabilities for the selection of a bin are the same for each of the algorithms in each round. Therefore, we get
    \begin{equation*}
        \E{v(\y)+v(\z)} \geq \E{v(\x)}.
    \end{equation*}
    With \Cref{lemma:gap-infalg-bound} we conclude that \nalgRandGAP\ is $\frac{1-\ln(2)}{2}$-competitive, since
    \begin{equation*}
        \E{v(\hat{\x})} \geq \frac{1}{2} \, \E{v(\x)} \geq  \biggl(\frac{1-\ln(2)}{2}\biggr) \, v(\x^*),
    \end{equation*}
which shows the claimed result.
\end{proof}

\section{The Fractional Knapsack Problem}
\label{sec:fractional-knapsack}

In this section, we address the fractional variant of the knapsack problem in the random-order model. In this setting, the currently best-known competitive ratio is $\frac{1}{4.39}$ by Giliberti and Karrenbauer~\cite{GilibertiK21}. We state our main result of this section in the following theorem.

\begin{theorem}
    \label{theo:fractional-knapsack}
    There exists an $\alpha$-competitive algorithm for the fractional knapsack problem in the random-order model where $\alpha = \smash{\frac{1}{e}} \approx \smash{\frac{1}{2.71}}$.
\end{theorem}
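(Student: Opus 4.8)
The plan is to analyze the virtual-style algorithm sketched in the introduction and show that its expected value is at least $(1/e-o(1))\,v(\tilde{\x})$, where by \Cref{pro:fractional-greedy-solution} the value $v(\tilde{\x})$ of the fractional greedy solution on all items is exactly the offline optimum of the fractional problem. Write $t=\lfloor n/e\rfloor$ for the sampling phase, $Q_\ell$ for the items revealed in the first $\ell$ rounds, and $S=Q_t$ for the sampled set. In round $\ell>t$ the algorithm computes the greedy solutions $\tilde{\x}(Q_\ell)$ and $\tilde{\x}(Q_{\ell-1})$ and packs the current item $j=\pi(\ell)$ to a volume $r_\ell$ equal to the free capacity it fills plus the reduction $\sum_{i\in S}s_i\bigl(\tilde{x}_i(Q_{\ell-1})-\tilde{x}_i(Q_\ell)\bigr)$ in the packed volume of the sampled items. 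This is the fractional analogue of accepting an item in the virtual secretary algorithm only when it displaces a sampled element, and since a denser incoming item can only shrink the fractions of items already present, each summand is non-negative, so $r_\ell\ge 0$ and the algorithm only adds volume.

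First I would establish feasibility by telescoping. Let $W(\ell)=\sum_{i\in Q_\ell\setminus S}s_i\,\tilde{x}_i(Q_\ell)$ be the volume the current greedy solution assigns to the non-sampled items. Writing $Q_\ell\setminus S=(Q_{\ell-1}\setminus S)\cup\{j\}$ and splitting the total volume $s_j\tilde{x}_j(Q_\ell)$ that $j$ newly occupies into the part that increases the overall packed volume and the part that displaces older items, one checks that $r_\ell=W(\ell)-W(\ell-1)$ in every round. Because $W(t)=0$, the total volume packed by the algorithm is $W(n)=\sum_{i\in Q_n\setminus S}s_i\,\tilde{x}_i(\tilde{\x})\le\min\{C,\sum_i s_i\}\le C$, which is precisely the feasibility guarantee; the same bound gives $r_\ell\le s_j\tilde{x}_j(Q_\ell)\le s_j$, so the packed fraction is always admissible.

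For the value I would mirror the three-step revelation of the random order used in \Cref{sec:gap}: draw (i) the set $Q_\ell$, (ii) the current item $\pi(\ell)=j$ uniformly from $Q_\ell$, and (iii) the sample $S$ as a uniform $t$-subset of $Q_{\ell-1}$. Conditioned on $Q_\ell$ and $\pi(\ell)=j$, each item of $Q_{\ell-1}$ lies in $S$ with probability $t/(\ell-1)$, so the expected displaced sampled volume is $\tfrac{t}{\ell-1}$ times the total displaced volume; adding the non-negative free-capacity term and using $\ell-1\ge t$ yields $\bigE{r_\ell\mid Q_\ell,\pi(\ell)=j}\ge\tfrac{t}{\ell-1}\,s_j\tilde{x}_j(Q_\ell)$. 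Multiplying by the density $d_j$ and averaging $j$ uniformly over $Q_\ell$ gives an expected round-$\ell$ value of at least $\tfrac{t}{\ell(\ell-1)}\,v(\tilde{\x}(Q_\ell))$. Arguing exactly as in \Cref{lemma:gap-fractional-value} — the greedy value on $Q_\ell$ dominates the restriction of $\tilde{\x}$ to $Q_\ell$, whose expectation over the random $\ell$-subset is an $\tfrac{\ell}{n}$-fraction of $v(\tilde{\x})$ — gives $\E{v(\tilde{\x}(Q_\ell))}\ge\tfrac{\ell}{n}\,v(\tilde{\x})$ and hence a round-$\ell$ contribution of at least $\tfrac{t}{n(\ell-1)}\,v(\tilde{\x})$.

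Summing over $\ell=t+1,\dots,n$ then yields
\[
\E{v(\x)}\;\ge\;\frac{t}{n}\sum_{\ell=t}^{n-1}\frac1\ell\,v(\tilde{\x})\;=\;\frac{t}{n}\bigl(H_{n-1}-H_{t-1}\bigr)\,v(\tilde{\x})\;\ge\;\Bigl(\tfrac{t}{n}\ln\tfrac{n}{t}-o(1)\Bigr)v(\tilde{\x}),
\]
and since $\tfrac{t}{n}\ln\tfrac{n}{t}$ is maximized at $t/n=1/e$ with value $1/e$, the choice $t=\lfloor n/e\rfloor$ delivers the claimed competitiveness up to an $o(1)$ term. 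The step I expect to be most delicate is the round-$\ell$ volume bound $\bigE{r_\ell\mid Q_\ell,\pi(\ell)=j}\ge\tfrac{t}{\ell-1}s_j\tilde{x}_j(Q_\ell)$: when the knapsack is not yet full the incoming item displaces no sampled volume, so the displacement term alone is too weak, and the inequality survives only because the free-capacity term is present and its coefficient $1$ exceeds $t/(\ell-1)$ for $\ell>t$. I would therefore treat the displaced volume and the free-capacity volume together, establishing the bound uniformly across the binding and non-binding regimes rather than carving out the non-binding rounds as a separate (and, for the near-optimal sample size, non-negligible) case.
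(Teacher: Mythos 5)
Your proposal is correct and takes essentially the same approach as the paper: it is the same algorithm (your $r_\ell$ is exactly the volume $s_{\pi(\ell)}\,x_{\pi(\ell)}$ packed by \Cref{Alg:FractionalKnapsack}), the same telescoping feasibility argument, and the same key estimate that, conditioned on $Q_\ell$ and $\pi(\ell)=j$, each previously revealed item lies in the sample with probability $t/(\ell-1)$, giving $\E{r_\ell \mid Q_\ell, \pi(\ell)=j} \geq \frac{t}{\ell-1}\, s_j\, \tilde{x}_j(Q_\ell)$ with the free-capacity term handled uniformly just as in the paper. The only difference is bookkeeping: you aggregate per round via $\E{v(\tilde{\x}(Q_\ell))} \geq \frac{\ell}{n}\, v(\tilde{\x})$, whereas \Cref{lemma:fractional-knapsack-item-bound} aggregates per item via the monotonicity $\tilde{x}_j(Q_\ell) \geq \tilde{x}_j$; both yield the identical sum $\frac{t}{n}\sum_{\ell=t+1}^{n}\frac{1}{\ell-1}\, v(\tilde{\x})$ and the same choice $t=\lfloor n/e\rfloor$.
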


\begin{algorithm}[t]
    \caption{\nalgFractionalKP}
    \label{Alg:FractionalKnapsack}
    \begin{algorithmic}[1]
        \Require random permutation $\pi$, number of items $n$, capacity $C$.
        \Ensure fractional knapsack solution $\x$
        \State $\x \gets 0$
        \State $ Q_0 \gets \emptyset $
        \For{rounds $ \ell = 1,\dotsc,n $}
        \State $ Q_\ell \gets Q_{\ell-1} \cup \{\pi(\ell)\}$
        \State $ \tilde{\x}(Q_\ell) \gets $ fractional greedy solution of revealed items $Q_\ell$
        \If{$ \ell > t \gets \lfloor n/e \rfloor $}
        \State $ x_{\pi(\ell)} = \tilde{x}_{\pi(\ell)}(Q_\ell) - \dfrac{\sum_{k=t+1}^{\ell-1}s_{\pi(k)} \, \bigl(\tilde{x}_{\pi(k)}(Q_{\ell-1}) - \tilde{x}_{\pi(k)}(Q_\ell) \bigr)}{s_{\pi(\ell)}} $
        \EndIf
        \EndFor
        \State \Return $\x$
    \end{algorithmic}
\end{algorithm}
 
We obtain the competitive ratio by \Cref{Alg:FractionalKnapsack}. The algorithm is built around the ideas of the virtual algorithm by Babaioff et al.~\cite{BabaioffIKK07}. The virtual algorithm was designed for packing items subject to a cardinality constraint of size $k$ in the random-order model, also referred to as the $k$-secretary problem. In each round $\ell$ after the sampling phase, the algorithm packs the current item if two conditions are fulfilled: the item has to be one of the $k$ most valuable items that have been revealed up to this round and the item that the current item removes from the set of the $k$ most valuable items must have been revealed during the sampling phase. Note that the second condition ensures feasibility, since each of the $k$ most valuable items after the sampling phase is removed at most once.

For the fractional knapsack problem we replace the set of the $k$ most valuable items by the fractional greedy solution. For each round $\ell$, let $\tilde{\x}(Q_\ell)$ denote the fractional greedy solution of the revealed items $Q_\ell$. For an arbitrary round $\ell$ and a fixed permutation $\pi$, we know that the total size of $\tilde{\x}(Q_\ell)$ cannot decrease compared to the total size of $\tilde{\x}(Q_{\ell-1})$ by part~(\ref{item:fractional-greedy-solution-total size}) of \Cref{pro:fractional-greedy-solution}, i.e., $ \sum_{k=1}^{\ell} s_{\pi(k)} \, \tilde{x}_{\pi(k)}(Q_\ell) \geq \sum_{k=1}^{\ell-1} s_{\pi(k)} \, \tilde{x}_{\pi(k)}(Q_{\ell-1})$ and we obtain
\begin{equation}
    \label{eq:fractional-knapsack-removed-size-bound}
    s_{\pi(\ell)} \, \tilde{x}_{\pi(\ell)}(Q_\ell) \geq \sum_{k=1}^{\ell-1} s_{\pi(k)} \, \bigl( \tilde{x}_{\pi(k)}(Q_{\ell-1}) - \tilde{x}_{\pi(k)}(Q_\ell) \bigr).
\end{equation}
Compared to the setting with a cardinality constraint, it is now possible that the revelation of item $\pi(\ell)$ in round $\ell$ removes multiple items from the fractional greedy solution or just a fraction of a single item. But note that it never increases the fraction by which other items are contained, i.e., $ \tilde{x}_{\pi(j)}(Q_{\ell-1}) \geq \tilde{x}_{\pi(j)}(Q_{\ell}) $ for each item $j \in Q_{\ell-1}$. 

Therefore, we have to adapt the packing conditions. In \Cref{Alg:FractionalKnapsack}, we pack the item $\pi(\ell)$ in round $\ell$ by a fraction $x_{\pi(\ell)}$, such that
\begin{equation*}
    s_{\pi(\ell)} \, x_{\pi(\ell)} = s_{\pi(\ell)} \, \tilde{x}_{\pi(\ell)}(Q_\ell) - \sum_{k=t+1}^{\ell-1} s_{\pi(k)} \, \bigl(\tilde{x}_{\pi(k)}(Q_{\ell-1}) - \tilde{x}_{\pi(k)}(Q_\ell) \bigr).
\end{equation*}
Thus, we take the size that the new item fills in the fractional greedy solution and for each item that was revealed after the sampling phase we subtract the size by which it is removed from the fractional greedy solution.

From the previous observations, it is easy to observe that $x_{\pi(\ell)} \in [0,1]$ and we get that the total size of the fractional assignment $\x$ is at most $C$, since
\begin{align*}		
    \sum_{\ell=1}^{n} s_{\pi(\ell)} \, x_{\pi(\ell)} &= \sum_{\ell=t+1}^{n} \Bigl( s_{\pi(\ell)} \, \tilde{x}_{\pi(\ell)}(Q_\ell) - \sum_{k=t+1}^{\ell-1}  s_{\pi(k)} \, \bigl(\tilde{x}_{\pi(k)}(Q_{\ell-1}) - \tilde{x}_{\pi(k)}(Q_\ell) \bigr) \Bigr) \\
    &= \sum_{\ell=t+1}^{n} \Bigl( s_{\pi(\ell)} \, \tilde{x}_{\pi(\ell)}(Q_\ell) - \sum_{k=\ell+1}^{n} s_{\pi(\ell)} \, \bigl(\tilde{x}_{\pi(\ell)}(Q_{k-1}) - \tilde{x}_{\pi(\ell)}(Q_k) \bigr) \Bigr) \\
    &= \sum_{\ell=t+1}^{n} s_{\pi(\ell)} \, \tilde{x}_{\pi(\ell)} \leq C,
\end{align*}
where we changed the summation order of the inner sum and the inequality at the end follows from the feasibility of fractional greedy solution $\tilde{\x}$ of all items $Q_n$. 

For the proof of \Cref{theo:fractional-knapsack}, we show in the following lemma a bound on the expectation of $x_j$ for each item $j \in [n]$.

\begin{lemma}
    \label{lemma:fractional-knapsack-item-bound}
    Let $\x$ be the fractional knapsack assignment of \Cref{Alg:FractionalKnapsack}. For each item $j \in [n]$ we have that
    \begin{equation*}
        \bigE{x_j} \geq \frac{\tilde{x}_j}{n} \sum_{\ell=t+1}^{n} \frac{t}{\ell-1}.
    \end{equation*}
\end{lemma}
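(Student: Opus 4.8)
The plan is to track the single round $\ell$ in which item $j$ is revealed, since $x_j$ is set only in that round, and to compare the algorithm's formula for $s_{\pi(\ell)} x_{\pi(\ell)}$ against the total size displaced from the greedy solution when $j$ arrives. Write $D_\ell := \sum_{p \in Q_{\ell-1}} s_p(\tilde{x}_p(Q_{\ell-1}) - \tilde{x}_p(Q_\ell))$ for the total size removed from previously revealed items. By the volume-conservation property of the greedy solution (part~(\ref{item:fractional-greedy-solution-total size}) of \Cref{pro:fractional-greedy-solution}), the increase in total greedy volume from $Q_{\ell-1}$ to $Q_\ell$ is nonnegative and equals $s_j \tilde{x}_j(Q_\ell) - D_\ell$; call this nonnegative remainder $\Delta_\ell$. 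The algorithm subtracts only the part of $D_\ell$ coming from the non-sampling rounds $t+1,\dotsc,\ell-1$, so splitting $D_\ell$ into its sampling part $D_\ell^{\mathrm{samp}}$ (rounds $\le t$) and the rest yields the exact identity $s_j x_j = \Delta_\ell + D_\ell^{\mathrm{samp}}$ whenever $\pi(\ell)=j$ and $\ell > t$.

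Next I would take expectations by conditioning on the revealed set $Q_\ell$ and on the event $\pi(\ell)=j$. Given these, $Q_{\ell-1} = Q_\ell \setminus \{j\}$ and all greedy fractions are fixed, while the sampling set (the first $t$ of the remaining $\ell-1$ items) is a uniformly random $t$-subset of $Q_\ell \setminus \{j\}$. Hence each displaced item enters $D_\ell^{\mathrm{samp}}$ with probability exactly $t/(\ell-1)$, and linearity gives $\bigE{D_\ell^{\mathrm{samp}} \mid Q_\ell, \pi(\ell)=j} = \frac{t}{\ell-1} D_\ell = \frac{t}{\ell-1}\bigl(s_j \tilde{x}_j(Q_\ell) - \Delta_\ell\bigr)$. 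Together with $\P{\pi(\ell)=j \mid Q_\ell} = 1/\ell$, the conditional expectation of $s_j x_j \, \mathbf{1}[\pi(\ell)=j]$ given $Q_\ell$ becomes $\frac{1}{\ell}\bigl(\frac{t}{\ell-1} s_j \tilde{x}_j(Q_\ell) + (1-\frac{t}{\ell-1})\Delta_\ell\bigr)$.

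To finish, I would drop the remainder term. Since $\ell > t$ we have $\ell - 1 \ge t$, so the coefficient $1 - t/(\ell-1)$ is nonnegative and, as $\Delta_\ell \ge 0$, discarding this contribution only decreases the bound. What survives is $\frac{1}{\ell}\cdot\frac{t}{\ell-1}\, s_j \tilde{x}_j(Q_\ell)$. Summing over the admissible sets $Q_\ell$ and over the rounds $\ell = t+1,\dotsc,n$, I would invoke the monotonicity noted before the lemma, namely $\tilde{x}_j(Q_\ell) \ge \tilde{x}_j(Q_n) = \tilde{x}_j$ whenever $j \in Q_\ell$, together with $\P{j \in Q_\ell} = \ell/n$, to obtain $\bigE{\tilde{x}_j(Q_\ell)\, \mathbf{1}[j \in Q_\ell]} \ge \frac{\ell}{n}\tilde{x}_j$. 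The factor $\ell$ cancels $1/\ell$, leaving $\bigE{s_j x_j} \ge \frac{s_j \tilde{x}_j}{n} \sum_{\ell=t+1}^{n} \frac{t}{\ell-1}$, and dividing by $s_j$ gives the claim.

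I expect the main obstacle to be the remainder $\Delta_\ell$. The naive hope is that the displaced volume $D_\ell$ simply equals $s_j \tilde{x}_j(Q_\ell)$ — that adding $j$ always evicts an equal volume because the knapsack is full — but this fails exactly when the revealed items do not yet fill the capacity. The resolution is not to force equality, nor to discard $D_\ell^{\mathrm{samp}}$ prematurely, but to carry $\Delta_\ell$ as an explicit nonnegative term; its coefficient retains the correct sign precisely because we are past the sampling phase ($\ell - 1 \ge t$), which is what makes the bound hold on every instance.
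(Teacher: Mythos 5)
Your proof is correct and takes essentially the same route as the paper's: both arguments condition on the round $\ell$ in which item $j$ arrives, use that each previously revealed item lies in the sample with probability $t/(\ell-1)$, invoke the displaced-volume inequality \eqref{eq:fractional-knapsack-removed-size-bound} (your statement $\Delta_\ell \geq 0$), and conclude via monotonicity of the greedy fractions together with $\P{j \in Q_\ell} = \ell/n$. The only cosmetic difference is that you carry the nonnegative remainder $\Delta_\ell$ explicitly and discard it at the end, whereas the paper applies the same inequality directly inside the conditional expectation.
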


\begin{proof}
    Since the permutation $ \pi $ is chosen uniformly at random, the probability that item $ j \in [n] $ is revealed in round $ \ell $ equals $ \frac{1}{n} $ for every round $ \ell $. We get
    \begin{equation*}
        \bigE{x_j} = \sum_{\ell=t+1}^{n} \bigP{\pi(\ell)=j} \, \bigE{x_{\pi(\ell)} \mid \pi(\ell)=j} = \frac{1}{n} \sum_{\ell=t+1}^{n} \bigE{x_{\pi(\ell)} \mid \pi(\ell)=j}.
    \end{equation*}
    For a fixed round $\ell > t$, we get by the assignment of \Cref{Alg:FractionalKnapsack} in round $\ell$ and by linearity of expectation that
    \begin{multline*}
        \bigE{x_{\pi(\ell)} \mid \pi(\ell)=j} = \bigE{\tilde{x}_{j}(Q_\ell) \mid \pi(\ell)=j} \\
         - \frac{1}{s_j} \, \BiggE{\sum_{k=t+1}^{\ell-1} s_{\pi(k)} \, \bigl(\tilde{x}_{\pi(k)}(Q_{\ell-1}) - \tilde{x}_{\pi(k)}(Q_\ell) \bigr) \;\Bigg\vert\; \pi(\ell)=j}. 
    \end{multline*}
    Due to the random order, each item contained in $Q_{\ell-1}$ is revealed in a round after the sample with probability $\tfrac{\ell-1-t}{\ell-1}$. Therefore, we get 
    \begin{multline*}
        \BiggE{\sum_{k=t+1}^{\ell-1} s_{\pi(k)} \, \bigl(\tilde{x}_{\pi(k)}(Q_{\ell-1}) - \tilde{x}_{\pi(k)}(Q_\ell) \bigr) \;\Bigg\vert\; \pi(\ell)=j} \\ 
        \begin{aligned}
            &= \frac{\ell-1-t}{\ell-1} \, \BiggE{\sum_{k \in Q_{\ell-1}} s_{k} \, \bigl(\tilde{x}_{k}(Q_{\ell-1}) - \tilde{x}_{k}(Q_\ell) \bigr) \;\Bigg\vert\; \pi(\ell)=j}\\
            &\leq \frac{\ell-1-t}{\ell-1} \, \bigE{s_{\pi(\ell)} \, \tilde{x}_{\pi(\ell)}(Q_\ell) \mid \pi(\ell)=j}   
        \end{aligned}
    \end{multline*}
    where the inequality follows from equation~\eqref{eq:fractional-knapsack-removed-size-bound}. We get
    \begin{align*}
        \bigE{x_{\pi(\ell)} \mid \pi(\ell)=j} &\geq \bigE{\tilde{x}_{j}(Q_\ell) \mid \pi(\ell)=j} - \frac{1}{s_j} \, \frac{\ell-1-t}{\ell-1} \, s_{j} \, \bigE{\tilde{x}_{j}(Q_\ell) \mid \pi(\ell)=j} \\
        &= \frac{t}{\ell - 1} \, \bigE{\tilde{x}_{j}(Q_\ell) \mid \pi(\ell)=j}
    \end{align*}
    and putting everything together yields
    \begin{align*}
        \bigE{x_{j}} \geq \frac{1}{n} \sum_{\ell=t+1}^{n} \frac{t}{\ell-1} \, \bigE{\tilde{x}_{j}(Q_\ell) \mid \pi(\ell)=j} \geq \frac{\tilde{x}_j}{n} \sum_{\ell=t+1}^{n} \frac{t}{\ell-1},
    \end{align*}
    as claimed.
\end{proof}

With the lemma, we are now able to prove \Cref{theo:fractional-knapsack}. The statement of the theorem follows from summing over all items and optimization over $t$.

\begin{proof}[Proof of \Cref{theo:fractional-knapsack}]
    We claim that the result holds for the fractional knapsack solution $\x$ of \Cref{Alg:FractionalKnapsack}. Applying \Cref{lemma:fractional-knapsack-item-bound} for every item $j \in [n]$ yields
    \begin{equation*}
        \E{v(\x)} = \sum_{j=1}^n v_j \, \E{x_j} \geq \sum_{j=1}^n v_j \, \frac{\tilde{x}_j}{n} \sum_{\ell=t+1}^{n} \frac{t}{\ell-1}  = \frac{t}{n} \, \sum_{\ell=t+1}^{n} \frac{1}{\ell-1} \, v(\tilde{\x}).
    \end{equation*}
    Bounding the sum implies
    \begin{equation*}
        \sum_{\ell=t+1}^{n} \frac{1}{\ell-1} = \sum_{\ell=t}^{n-1} \frac{1}{\ell} \geq \int_t^n \frac{1}{\ell} \ \mathrm{d}\ell = \ln\biggl(\frac{n}{t}\biggr).
    \end{equation*}
    With $ t = \bigl\lfloor \tfrac{n}{e} \bigr\rfloor $ we get 
    \begin{equation*}
        \E{v(\x)} \geq \biggl(\frac{1}{e} - o(1)\biggr) \, v(\tilde{\x})
    \end{equation*}
    and the result follows.
\end{proof}

We proceed to give an upper bound of $1/e$ on the competitiveness of any deterministic algorithm for the fractional knapsack problem in the random-order model. The proof exploits the similarity between deterministic algorithms for the fractional knapsack problem in the random-order model and randomized stopping rules together with Correa et al.~\cite[Theorem~2]{CorreaDFS22}. 

\begin{theorem}
    \label{theo:lower}
    Let $\varepsilon > 0$. No deterministic algorithm for the fractional knapsack problem in the random-order model is $(\frac{1}{e}+\varepsilon)$-competitive.
\end{theorem}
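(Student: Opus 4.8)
The plan is to reduce the problem to fractional single-item selection and then import the $1/e$ impossibility for the secretary problem with i.i.d.\ values from an unknown distribution. First I would restrict attention to instances in which every item has size exactly equal to the capacity, $s_j = C$. On such instances the constraint $\sum_j s_j\,x_j \le C$ becomes $\sum_j x_j \le 1$, and by \Cref{pro:fractional-greedy-solution} the fractional greedy solution packs only the single highest-density (hence highest-value) item, so the benchmark collapses to $v(\tilde{\x}) = v(\x^*) = \max_j v_j$. Thus the fractional knapsack problem reduces to fractional single selection: an online algorithm irrevocably commits a fraction $x_{\pi(\ell)} \in [0,1]$ to the arriving item subject to total budget $1$, and is measured against the maximum value.

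Second, I would make precise the correspondence announced before the statement, between a deterministic fractional algorithm on these instances and a randomized stopping rule. Given the online fractions $x_{\pi(1)}, x_{\pi(2)}, \dots$, which satisfy $\sum_{k \le \ell} x_{\pi(k)} \le 1$, I would define a rule that, conditioned on not yet having stopped before round $\ell$, selects item $\pi(\ell)$ with conditional probability $x_{\pi(\ell)} / (1 - \sum_{k<\ell} x_{\pi(k)})$. Then the unconditional probability of selecting $\pi(\ell)$ is exactly $x_{\pi(\ell)}$, the rule selects at most one item, and its expected value equals $\sum_\ell x_{\pi(\ell)}\,v_{\pi(\ell)} = v(\x)$. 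Since the fractions depend only on items revealed so far, the induced rule is a legitimate online selection algorithm that is oblivious to the value distribution.

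Third, I would pass from the random-order model to the i.i.d.\ model. Let $v_1,\dots,v_n$ be drawn i.i.d.\ from a distribution $F$ taken from the hard family underlying Correa et al.~\cite[Theorem~2]{CorreaDFS22}. Because i.i.d.\ sequences are exchangeable, applying a uniformly random permutation leaves their distribution unchanged, so the expected value of the algorithm over a random instance $I \sim F^n$ and a random order equals the expected value of the induced stopping rule run directly on the i.i.d.\ sequence, while $\Exp_I[\max_j v_j] = \Exp[\max]$. By \cite[Theorem~2]{CorreaDFS22}, every online selection rule that is oblivious to $F$ has expected selected value at most $(\tfrac1e + o(1))\,\Exp[\max]$; hence $\Exp_I\bigl[\Exp_\pi[v(\x)]\bigr] \le (\tfrac1e + o(1))\,\Exp_I[v(\tilde{\x})]$.

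Finally, I would close with an averaging (Yao-type) argument. If the algorithm were $(\tfrac1e+\varepsilon)$-competitive, the per-instance guarantee $\Exp_\pi[v(\x)] \ge (\tfrac1e + \varepsilon - o(1))\,v(\tilde{\x})$ would hold for every realization of $I$; taking expectation over $I \sim F^n$ yields $\Exp_I\bigl[\Exp_\pi[v(\x)]\bigr] \ge (\tfrac1e + \varepsilon - o(1))\,\Exp_I[v(\tilde{\x})]$, which contradicts the bound of the previous step once $n$ is large enough that the $o(1)$ terms fall below $\varepsilon/2$ (using $\Exp[\max] > 0$). The main obstacle is the second step: making the correspondence between the deterministic fractional allocation and an admissible randomized stopping rule fully rigorous while preserving expected value, and correctly aligning the distribution-dependent benchmark so that the per-instance competitiveness inequality survives averaging against Correa et al.'s expected-maximum bound. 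Minor care is also needed for the distinct-density tie-breaking assumption, which can be secured by infinitesimal perturbations of the values.
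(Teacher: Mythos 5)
Your proposal is correct and follows essentially the same route as the paper: restrict to unit-size, unit-capacity instances with i.i.d.\ values, convert the deterministic fractional packing fractions into a randomized stopping rule via the conditional probabilities $x_{\pi(\ell)}/(1-\sum_{k<\ell}x_{\pi(k)})$ (so the unconditional selection probability equals the packed fraction and expected values coincide), and invoke Correa et al.~\cite[Theorem~2]{CorreaDFS22}. If anything, your treatment is slightly more explicit than the paper's about the exchangeability argument and the final Yao-type averaging step, which the paper leaves implicit.
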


\begin{proof}
    For $n \in \mathbb{N}$, let $\mathcal{C}_n$ be the class of instances such that there are $n$ items, all items~$j$ have a size of $s_j = 1$ and the knapsack capacity is set to $C=1$. Moreover, for all items $j$, the value $v_j$ is determined by drawing a value $V_j$ from an unknown distribution, such that the values $v_1=V_1,\dotsc,v_n=V_n$ are independent and identically distributed (i.i.d.) random variables. Note that this is a special case of the random-order model.

    Those instances are closely related to the prophet problem with identical, but unknown distributions studied by Correa et al.~\cite{CorreaDFS22}. Here, an algorithm observes one by one $n$ nonnegative numbers $V_1,\dotsc,V_n$ which are drawn i.i.d.~from an unknown distribution. The algorithm has to decide when to stop such that it maximizes the expected value of the number at which it stops. 
    Hence, an algorithm for the prophet problem is a stopping rule $\r$ which decides for each $j \in [n]$ if it should stop at a current number $V_j$ based only on the values $V_1,\dotsc,V_j$ it has already observed. Formally, Correa et al.~\cite{CorreaDFS22} define a $n$-stopping rule $\r$ as a family of functions $r_1,\dotsc,r_n$ with $\smash{r_j \colon \R_+^j \rightarrow [0,1]}$ for all $j \in [n]$. For $\v = (v_1,\dotsc,v_n) \in \R_+^n$, $r_j(v_1,\dotsc,v_j)$ denotes the probability of stopping at $V_j$ under the conditions that we have not stopped at a number $V_1,\dotsc,V_{j-1}$ and that we observed the values $V_1=v_1,\dotsc,V_j=v_j$. Further, each $n$-stopping rule $\r$ defines a stopping time $\tau$ which is a random variable with support $\{1,\dotsc,n\}\cup \{\infty\}$ such that, for every $\v \in \R_+^n$,
    \begin{equation}
        \label{eq:frac-lower-time-def}
        \P{\tau = \ell \mid V_1=v_1,\dotsc,V_n=v_n} = \Biggl( \prod_{j \in [\ell-1]} (1-r_j(v_1,\dotsc,v_j)) \Biggr) \, r_\ell(v_1,\dotsc,v_\ell).
    \end{equation}
    The performance of a $n$-stopping rule is measured by the ratio~$\E{V_\tau} / \E{\max\{V_1,\dotsc,V_n\}}$, where we set $V_\infty=0$. We also set $\E{\v_\tau} = \E{V_\tau \mid V_1 = v_1,\dotsc,V_n=v_n} $ for every $\v \in \R_+^n $.    
    Correa et al.~\cite[Theorem~2]{CorreaDFS22} state the following: For every $\varepsilon > 0 $ there exists an $n_0 \in \N$, such that for every $n \geq n_0$ and every $n$-stopping rule $\r$ with stopping time $\tau$, there exists a unknown distribution $F$ such that
    \begin{equation*}
        \E{V_\tau} \leq \biggr(\frac{1}{e} + \varepsilon\biggl) \, \E{\max\{V_1,\dotsc,V_n\}},
    \end{equation*}
    where $V_1,\dotsc,V_n$ are i.i.d. random variables with distribution $F$.

    On the other hand, every deterministic algorithm for the fractional knapsack problem on instances in $\mathcal{C}_n$, defines an $n$-packing rule $\p$ which is a family of functions $p_1,\dotsc,p_n$ with $p_j \colon \R_+^j \rightarrow [0,1] $ for all $j \in [n]$, with the property that, for every $\v \in \R_+^n$, we have $ \sum_{j \in [n]} p_j(v_1,\dotsc,v_j) \leq 1$. Intuitively, $p_j(v_1,\dotsc,v_j)$ denotes the fraction that the algorithm packs of the item it observes in round $j$ under the condition that it observed the values $V_1=v_1,\dotsc,V_j=v_j$. The stated inequality holds since all sizes and capacity are equal to $1$ and since the algorithm returns a feasible fractional knapsack solution. 
    We let $ \E{\p(V)} $ denote the total expected value obtained by a deterministic algorithm for the fractional knapsack problem with $n$-packing rule $\p$ and, for $ \v \in \R_+^n $, we use $ \p(\v) = \sum_{j \in [n]} p_j(v_1,\dotsc,v_j) \, v_j$.

    We continue to show that every deterministic algorithm on an instance in $\mathcal{C}_n$ with $n$-packing rule $\p$ corresponds to a $n$-stopping rule $\r_\p$ with stopping time $\tau_\p$ for the prophet problem, such that $ \p(\v) = \E{v_{\tau_\p}}$ for every $\v \in \R_+^n $. 
    
    For each $\v \in \R_+^n $, we set $\P{\tau_\p = \ell \mid V_1=v_1,\dotsc,V_n=v_n} = p_\ell(v_1,\dotsc,v_\ell)$ for $\ell \in [n]$ and $ \P{\tau_\p = \infty \mid V_1=v_1,\dotsc,V_n=v_n} = 1 - \sum_{j \in [n]} p_j(v_1,\dotsc,v_j) $. Then, \eqref{eq:frac-lower-time-def} recursively defines the functions $ r_1,\dotsc,r_n$ of the corresponding stopping rule by
    \begin{align*}
        r_j(v_1,\dotsc,v_j) = 
        \begin{cases}
            \frac{p_j(v_1,\dotsc,v_j)}{1 - \sum_{k \in [j-1]} p_k(v_1,\dotsc,v_k)} &\text{if } \sum_{k \in [j-1]} p_k(v_1,\dotsc,v_k) < 1,\\
            0 &\text{if } \sum_{k \in [j-1]} p_k(v_1,\dotsc,v_k) = 1,
        \end{cases}
    \end{align*}
    for every $\v \in \R_+^n $. We obtain the property $ \p(\v) = \E{v_{\tau_\p}}$ for every $\v \in \R_+^n $ since
    \begin{align*}
        \E{v_{\tau_\p}} &= \sum_{\ell \in [n]} \P{\tau_\p = \ell \mid V_1=v_1,\dotsc,V_n=v_n} \, v_\ell \\
        &= \sum_{\ell \in [n]} p_\ell(v_1,\dotsc,v_\ell) \, v_\ell \\
        &= \p(\v).
    \end{align*}
    We conclude that for every $\varepsilon > 0$ there exists an $n_0 \in \N$, such that for every $n \geq n_0$ and every deterministic algorithm for the fractional knapsack problem on instances $\mathcal{C}_n$ with $n$-packing rule $\p$, there exists an unknown distribution $F$ such that
    \begin{equation*}
        \E{\p(V)} = \E{V_{\tau_\p}} \leq \biggr(\frac{1}{e} + \varepsilon\biggl) \, \E{\max\{V_1,\dotsc,V_n\}} = \biggr(\frac{1}{e} + \varepsilon\biggl) v(\tilde{\x}),
    \end{equation*}
    where we first used that $ \p(\v) = \E{v_{\tau_\p}} $ for every $\v \in \R_+^n $, for the inequality we used Correa et al.~\cite[Theorem~2]{CorreaDFS22} and at the end we used that $\E{\max\{V_1,\dotsc,V_n\}} = v(\tilde{\x})$ on an instance in $\mathcal{C}_n$.
\end{proof}

\newpage

\appendix

\section{Algorithms of \Cref{sec:gap}}
\label{app:gap-algorithms}

The following algorithms are referred to in \Cref{sec:gap}.

\begin{algorithm}[H]
    \caption{\nalgFeasGAP}
    \label{Alg:FeasibleGAP}
    \begin{algorithmic}[1]
    \Require random permutation $\pi$, number of items $n$, set of bins $[m]$ with capacities $C_i, i \in [m]$
    \Ensure assignment $\y$ satisfying GAP constraints \eqref{eq:GapConstraint1}--\eqref{eq:GapConstraint3}
        \State $\y \gets 0$
        \State $ Q_0 \gets \emptyset $
        \For{rounds $ \ell = 1,\dotsc,n $}
            \State $Q_\ell \gets Q_{\ell-1} \cup \{\pi(\ell)\}$
            \If{$ \ell > t \gets \lfloor n/2 \rfloor $}
                \State $ \tilde{\x}(Q_\ell) \gets $ optimal fractional assignment of revealed items $Q_\ell$
                \State $i^{(\ell)} \gets $ select a bin $i \in [m]$ where $ \P{i^{(\ell)} = i} = \tilde{x}_{i,\pi(\ell)}(Q_\ell)$ ($i^{(\ell)} = 0$ if none)
                \State Define $\x^{(\ell)}$ with $x_{i,j}^{(\ell)} = \begin{cases}
                    1 \text{ if } i=i^{(\ell)}, j=\pi(\ell), \\
                    0 \text{ otherwise }
                \end{cases}
                \text{ for } i \in [m], j \in [n]$.
                \If{$ i^{(\ell)} = 0 \textbf{ or } s_{i^{(\ell)},\pi(\ell)} + \sum_{k=1}^{\ell-1} s_{i^{(\ell)},\pi(k)} \, y_{i^{(\ell)},\pi(k)} \leq C_{i^{(\ell)}} $} 
                    \State $\y \gets \y + \x^{(\ell)}$
                \EndIf
            \EndIf
        \EndFor
        \State \Return $\y$
    \end{algorithmic}
\end{algorithm}

\begin{algorithm}[H]
    \caption{\nalgImitGAP}
    \label{Alg:ImitativeGAP}
    \begin{algorithmic}[1]
    \Require random permutation $\pi$, number of items $n$, set of bins $[m]$ with capacities $C_i, i \in [m]$
    \Ensure assignment $\z$ satisfying GAP constraints \eqref{eq:GapConstraint1}--\eqref{eq:GapConstraint3}
        \State $\z \gets 0$ \Comment{actual assignment}
        \State $\y \gets 0$ \Comment{imitative assignment}
        \State $ Q_0 \gets \emptyset $
        \For{rounds $ \ell = 1,\dotsc,n $}
            \State $Q_\ell \gets Q_{\ell-1} \cup \{\pi(\ell)\}$
            \If{$ \ell > t \gets \lfloor n/2 \rfloor $}
                \State $ \tilde{\x}(Q_\ell) \gets $ optimal fractional assignment of revealed items $Q_\ell$
                \State $i^{(\ell)} \gets $ select a bin $i \in [m]$ where $ \P{i^{(\ell)} = i} = \tilde{x}_{i,\pi(\ell)}(Q_\ell) $ ($i^{(\ell)} = 0$ if none)
                \State Define $\x^{(\ell)}$ with $x_{i,j}^{(\ell)} = \begin{cases}
                    1 \text{ if } i=i^{(\ell)}, j=\pi(\ell), \\
                    0 \text{ otherwise }
                \end{cases}
                \text{ for } i \in [m], j \in [n]$.
                \If{$ i^{(\ell)} = 0 \textbf{ or } s_{i^{(\ell)},\pi(\ell)} + \sum_{k=1}^{\ell-1} s_{i^{(\ell)},\pi(k)} \, y_{i^{(\ell)},\pi(k)} \leq C_{i^{(\ell)}} $} 
                    \State $ \y \gets \y + \x^{(\ell)}$
                \ElsIf{$ s_{i^{(\ell)},\pi(\ell)} + \sum_{k=1}^{\ell-1} s_{i^{(\ell)},\pi(k)} \, y_{i^{(\ell)},\pi(k)} > C_{i^{(\ell)}} $}
                    \If{$\sum_{k=1}^{\ell-1} z_{i^{(\ell)},\pi(k)}$ = 0}
                        \State $\z \gets \z + \x^{(\ell)}$
                    \EndIf
                \EndIf
            \EndIf
        \EndFor
        \State \Return $\z$
    \end{algorithmic}
\end{algorithm}

\newpage

\bibliographystyle{abbrvnat}
\bibliography{randomorderknapsack}

\end{document}